\newcommand{\PR}[1]{\mathrm{Pr}[#1]}
\newcommand{\NN}{\nonumber}
\newcommand{\NNL}{\nonumber\\}
\newtheorem{theorem}{Theorem}
\newtheorem{lemma}{Lemma}
\newtheorem{corollary}{Corollary}
\newtheorem{example}{Example}
\newtheorem{observation}{Observation}
\theoremstyle{definition}
\newtheorem{definition}{Definition}
\title{Optimal Feedback Rate Sharing Strategy in Zero-Forcing MIMO Broadcast Channels}
\author{\IEEEauthorblockN{Jung~Hoon~Lee},~\IEEEmembership{Student Member,~IEEE},~and
    \IEEEauthorblockN{Wan~Choi},~\IEEEmembership{Senior Member,~IEEE}
%
%
\thanks{Parts of this paper have been presented in IEEE
    International Conference on ITS Telecommunication (ITST), Lille,
    France, Oct. 2009. }
\thanks{J.~H.~Lee and W.~Choi are with Department of Electrical
    Engineering, Korea Advanced Institute of Science and Technology
    (KAIST), Daejeon 305-701, Korea (e-mail: tantheta@kaist.ac.kr,
    wchoi@ee.kaist.ac.kr). }
}
\begin{document}
\maketitle

\begin{abstract}
In this paper, we consider a multiple-input multiple-output
broadcast channel with limited feedback where all users share the
feedback rates. Firstly, we find the optimal feedback rate sharing
strategy using zero-forcing transmission scheme at the transmitter
and random vector quantization at each user. We mathematically prove
that equal sharing of sum feedback size among all users is the
optimal strategy in the low signal-to-noise ratio (SNR) region,
while allocating whole feedback size to a single user is the optimal
strategy in the high SNR region. For  the mid-SNR region, we propose
a simple numerical method to find the optimal feedback rate sharing
strategy based on our analysis and show that the equal allocation of
sum feedback rate to a partial number of users is the optimal
strategy. It is also shown that the proposed simple numerical method
can be applicable to finding the optimal feedback rate sharing
strategy when different path losses of the users are taken into
account.
We show that our proposed feedback rate sharing scheme can be
extended to the system with stream control and is still useful for
the systems with other techniques such as regularized zero-forcing
and spherical cap codebook.

\end{abstract}

\begin{IEEEkeywords}
multiple-input multiple-output (MIMO) broadcast channel, limited
feedback, random vector quantization, feedback rate sharing
\end{IEEEkeywords}

\IEEEpeerreviewmaketitle

\section{Introduction}

In recent years, multiple-input multiple-output (MIMO) broadcast
channel (BC) systems, constructed by an access point with multiple
antennas and many users, have been intensively studied \cite{CS2003,
VT2003, WSS2006}. In a MIMO BC, multiple users are simultaneously
served through independent user specific multiple data streams and a
\emph{multiplexing gain} is attained as in point-to-point MIMO. The
capacity region of the Gaussian MIMO BC was derived in
\cite{WSS2006} where dirty paper coding (DPC) \cite{C1983} is known
to be a capacity achieving scheme. Because DPC is hard to implement,
many practical techniques have been proposed such as zero-forcing
precoding (channel inversion) \cite{YG2006} and Tomlinson-Harashima
precoding \cite{ZSE2002}. In these schemes, multiuser interference
is pre-canceled at the transmitter with perfect channel state
information at the transmitter (CSIT).

CSIT can be obtained by reciprocity between uplink and downlink
channels in time division duplexing (TDD) systems and feedback from
receivers in frequency division duplexing (FDD) systems. In FDD
systems, the amount of feedback information is in general limited
and hence perfect CSIT is not available. The accuracy of CSIT
depends on both the type of feedback technique and the amount of
feedback overhead allowed.
A popular feedback architecture is a codebook approach where an
index of a codeword in a predetermined codebook is fed back to the
transmitter \cite{LHSH2003}. There have been many studies on the
performance of codebook based multi-user MIMO systems using various
transmission schemes such as zero-forcing (ZF) beamforming
\cite{J2006}, block diagonalization (BD) \cite{RJ2008,CLL2010}, and
the unitary precoding \cite{KPLK2009}.

In limited feedback environments, a key difference between  MIMO BC
and point-to-point MIMO is the multiplexing gain achievability
\cite{LHSH2003, J2006}.
%
%
In point-to-point MIMO, a full multiplexing gain is achievable even
with open-loop transmission. On the other hand, a full multiplexing
gain cannot be achieved using a finite amount of feedback
information in a MIMO BC \cite{J2006}. The multiplexing gain of MIMO
BC rather diminishes in the high signal-to-noise ratio (SNR) region
due to imperfect orthogonalization resulting from inaccurate CSIT.
To maintain the multiplexing gain, it was shown in \cite{J2006,
RJ2008} that the feedback size should linearly increase with SNR (in
decibel scale).

Since a large amount of feedback is a heavy burden on uplink
capacity, many studies have been devoted to increasing the
efficiency of limited feedback. In \cite{J2008}, a feedback
reduction technique has been proposed using multiple antennas at the
receiver. User selection in MIMO BC has been studied to reduce the
amount of uplink feedback \cite{SH2005, YJG2007, CFAH2007, APL2009,
HR2013}.
In \cite{YJG2007}, random beamforming was generalized and
semi-orthogonal user selection was proposed. Also, it was shown that
channel quality information as well as channel direction information
 are necessary to obtain both the maximum multiplexing and
diversity gains. In \cite{APL2009}, a dual-mode limited feedback
system was proposed to switch between single user and multiuser
transmissions. The authors in \cite{HR2013} investigated two partial
feedback schemes for user scheduling.

In practical systems, the uplink capacity of control channels is
typically limited and shared among multiple users.
A sum feedback rate constraint in space division multiple access
(SDMA) was considered in \cite{HHA2007} but the amount of feedback
information per user was held constant. In \cite{RJ2009}, the
optimum feedback size per user and the number of feedback users were
investigated under a sum feedback rate constraint assuming all users
employ the same amount of feedback. Recently, strategies of feedback
bit partitioning between the desired and interfering channels
proposed in \cite{BH2011} for a cooperative multicell system.
In $K$-user multiple-input-single-output (MISO) interference
channel, the feedback rate control to minimize the average
interference power was proposed in \cite{HLK2011}.

In MIMO BC, the effects of different amounts of feedback size among
the users are studied in \cite{CKCK2008, LC2009, XZD2009, KY2012}.
In \cite{CKCK2008}, the feedback rate sharing strategy has been
proposed to minimize the upper bound of sum rate loss in correlated
single-polarized and dual-polarized channels, respectively. The
feedback rate sharing strategies in the low and high SNR regions
have been proposed in terms of the correlation coefficient.
The feedback rate sharing strategy to increase the sum rate was also
proposed in \cite{XZD2009} by considering users' path losses, where
the system performance was shown to be improved by changing feedback
bit allocation according to the path losses. However, when the path
losses are similar, the feedback rate sharing strategy in
\cite{XZD2009} is to equally share the sum feedback size regardless
of SNR levels but it is not optimal in some SNR regions. Also, the
effects of path losses are canceled out in the high SNR region so
that equal sharing of the sum feedback size is not optimal any more.
The feedback rate sharing strategy to minimize total transmission
power for given users' outage probabilities was proposed in
\cite{KY2012}.

In this paper, we provide a new analytical framework for the
feedback rate sharing strategy and rigorously analyzed the effects
of different amounts of feedback information among users by
extending and generalizing the results of \cite{LC2009}. The effects
of feedback rate sharing on the achievable rate are investigated in
a MIMO BC with ZF beamforming at the transmitter and random vector
quantization (RVQ) \cite{SH2005_2} at each user. We derive the
optimal feedback rate sharing strategies according to various SNR
regions.
Our analytical results prove the optimal feedback rate sharing
strategy in the low and the high SNR regions. The feedback rate
should be equally shared among all users in the low SNR region while
the whole feedback rate should be allocated to a single user in the
high SNR region.
For the mid-SNR region, we establish a simple numerical method for
finding the optimal feedback sharing strategy based on our
analytical framework. Through the proposed numerical method, we find
that to equally allocate whole feedback size to a partial number of
users is the optimal feedback rate sharing strategy.
For the users suffering different path losses, we show that the
proposed numerical method can be applicable to finding the optimal
feedback rate sharing strategy. In the high SNR region, we prove
that the effects of path losses are canceled out and hence the
optimal feedback strategy is to allocate the whole feedback size to
a single user with the highest SNR.
Our proposed feedback rate sharing strategy derived from the system
with ZF beamforming and RVQ is also evaluated for the systems with
other techniques such as stream control, regularized ZF transmission
scheme and spherical cap codebook model \cite{MSEA2003, YJG2007}.
Our numerical results show that our proposed feedback rate sharing
strategy is still valid for other configurations.

The rest of this paper is organized as follows. We describe the
system model and formulate the problem in Section II.
The impacts of asymmetric feedback size among users are investigated
in Section III. The optimal sum feedback rate sharing strategy is
derived in Section IV. The numerical results are shown in Section V.
Section VI concludes our paper.

\section{Problem Formulation}
\subsection{System Model}

Our system model is depicted in Fig. \ref{fig1}. We consider a MIMO
BC with $M$ transmit antennas and $K (=M)$ users having a single
antenna.
If the receiver has multiple antennas, each antenna can be
considered as an independent user, or receive combining discussed in
\cite{J2008} can be adopted.
The received signal at the user $k$ becomes
\begin{equation}
    y_k=\sqrt{\gamma_k}\mathbf{h}_k^\dagger\mathbf{x}+n_k, \quad
    k=1,\ldots,K, \NN
\end{equation}
where $\gamma_k$ is the path loss of the $k$th user, $\mathbf{h}_k
\in \mathbb{C}^{M\times 1}$ is a channel vector whose entries are
independent and identically distributed (i.i.d.) circularly
symmetric complex Gaussian random variables with zero mean and unit
variance, $\mathbf{x} \in \mathbb{C}^{M\times 1}$ is the transmit
signal vector, $n_k$ is a complex Gaussian noise with zero mean and
unit variance, and the superscript $\dagger$ denotes conjugate
transposition of a vector.
When $P$ is the transmit signal power, $\mathbf{x}$ satisfies that
$\mathbb{E} [tr\left(\mathbf{x} \mathbf{x}^\dagger\right)]=P$.
If users demand the same quality of service, the propagation path
losses need to be pre-compensated to yield the same average SNR at
the receiver in downlink. Thus, we firstly assume that the different
propagation path losses for users are compensated by the
transmitter, i.e., $\gamma_1 = \gamma_2 = \cdots = \gamma_K=1$. The
open loop power control is also useful for preventing waste of
transmit power and avoiding extra interference to other users.
Then, we extend our results to different path loss scenarios in
Section \ref{sec:different_paths}.

As a simple linear precoding scheme, we adopt a ZF beamforming
scheme in which the data stream for each user is aligned with its
precoding vector. We denote the precoding vector of the $k$th user
as $\mathbf{v}_k$ such that $\Vert\mathbf{v}_k \Vert=1$ and then the
transmit signal $\mathbf{x}$ becomes $\mathbf{x}=\sum_{k=1}^{K}
\mathbf{v}_k s_k$,
where $s_k$ is the data symbol for the $k$th user.
We assume that the transmitter has only channel direction
information (CDI) so that the feedback for power allocation can be
saved. Therefore, the transmitter allocates equal power to users
such that $\mathbb{E}|s_k|^2=P/M$. Also, we assume that $s_k$ is
chosen from a Gaussian codebook and the codeword block length is
sufficiently long so that it encounters all possible channel
realizations for ergodicity.
Obviously, power adaptation can further increase the achievable rate
but the power allocation using channel quality information (CQI) is
a secondary problem when the number of transmit antennas is same as
the number of served users, i.e., full multiplexing \cite{J2006}.
In Section \ref{section:stream_control}, we will consider the stream
control where the transmitter adaptively controls multiplexing gain
and the served users equally share total transmit power.

The received signal at the $k$th user using linear precoding becomes
\begin{align}
    y_k=\mathbf{h}_k^\dagger\mathbf{v}_k s_k
        + \sum_{i=1,i\ne k}^{K} \mathbf{h}_k^\dagger\mathbf{v}_i s_i +n_k, \quad
        k=1,\ldots,K. \label{eqn:y_k}
\end{align}
When the transmitter knows $\{\mathbf{h}_1, \ldots, \mathbf{h}_K\}$
perfectly, the precoding vectors yield zero multiuser interferences,
i.e., $\sum_{i\ne k}\mathbf{h}_k^\dagger\mathbf{v}_i s_i =0$;
the received signal at the $k$th user becomes
\begin{align}
    y_k=\mathbf{h}_k^\dagger\mathbf{v}_k s_k + n_k, \quad k=1,\ldots,K.
    \NN
\end{align}
In most practical systems, however, the imperfect CSI is only
available at the transmitter due to the limited feedback budget.
The user $k$ quantizes its own channel, $\mathbf{h}_k$, and feeds
the quantized CSI denoted by $\hat{\mathbf{h}}_k$ to the
transmitter.
Then, the transmitter finds the precoding vectors $\mathbf{v}_1,
\ldots, \mathbf{v}_K$ from the quantized CSI, $\hat{\mathbf{h}}_1,
\ldots, \hat{\mathbf{h}}_K$, instead of the perfect CSI,
$\mathbf{h}_1, \ldots, \mathbf{h}_K$.
Because of the quantization errors, the precoding vectors obtained
from the quantized CSI cannot perfectly mitigate the multiuser
interference.
The precoding vector cannot be exactly picked in the null space of
the other users' channel vectors; the interference term $\sum_{i\ne
k} \mathbf{h}_k^\dagger \mathbf{v}_i s_i$ remains in the received
signal.

At the transmitter, a quantized channel matrix defined by
$\hat{\mathbf{H}} \triangleq [\hat{\mathbf{h}}_1,\ldots,
\hat{\mathbf{h}}_K] ^\dagger$ is constructed with the quantized CSI
fed back from the users. The $k$th normalized column vector of
$\hat{\mathbf{H}}^{-1}$ becomes the precoding vector for the $k$th
user, $\mathbf{v}_k$, where $(\cdot)^{-1}$ denotes the matrix
inversion.
Thus, we can decompose $\hat{\mathbf{H}}^{-1}$ as
$\hat{\mathbf{H}}^{-1}=\mathbf{V}\mathbf{\Lambda}$,
%
%
where $\mathbf{V}=\left[\mathbf{v}_1, \ldots,\mathbf{v}_K\right]$ is
a zero-forcing beamforming matrix such as $\Vert\mathbf{v}_k
\Vert^2=1$, and $\mathbf{\Lambda} =
\textrm{diag}(\lambda_1,\ldots,\lambda_K)$ is diagonal matrix whose
element $\lambda_k \in \mathbb{R}^+$ is the Euclidean norm of the
$k$th column of $\hat{\mathbf{H}}^{-1}$.

For the channel quantization, RVQ is considered at each user, which
is widely used to analyze the effects of quantization error and
asymptotically optimal as the number of antennas goes to infinity
\cite{AL2007, J2006}.
Although the performance is suboptimal for a small feedback size,
RVQ makes the analysis tractable and provides insightful results.
Furthermore, the overall trends of RVQ generally agree with the
trends of other quantization models \cite{YJG2007}.

Using $b_k$-bit RVQ at the $k$th user, the quantized CSI is obtained
by
\begin{align}
\hat{\mathbf{h}}_k
    =\underset{\mathbf{w}\in \mathcal{W}_k}{\arg\max}
        ~~\cos^2(\angle(\mathbf{h}_k, \mathbf{w}))
    =\underset{\mathbf{w}\in \mathcal{W}_k}{\arg\max}
        ~~|\mathbf{h}_k^\dagger\mathbf{w}|^2, \NN
\end{align}
where $\mathcal{W}_k=\{\mathbf{w}_{k,1}, \ldots, \mathbf{w}
_{k,2^{b_k}}\}$ is a random vector codebook at the $k$th user
consists of $2^{b_k}$ randomly chosen isotropic $M$-dimensional unit
vectors.
The quantization error denoted by $Z_k \in [0,1]$ becomes
\begin{align}
Z_k=\underset{\mathbf{w}\in \mathcal{W}_k}{\min}
        \sin^2 (\angle(\mathbf{h}_k,\mathbf{w}))\label{eqn:Z_k}
    =\sin^2 (\angle(\mathbf{h}_k,\hat{\mathbf{h}}_k))
    =1-\vert \tilde{\mathbf{h}}_k^\dagger \hat{\mathbf{h}}_k\vert^2,
\end{align}
where $\tilde{\mathbf{h}}_k = \mathbf{h}_k/\Vert\mathbf{h}_k \Vert$.
For an arbitrary codeword $\mathbf{w}\in \mathcal{W}_k$, $\vert
\tilde{\mathbf{h}}_k ^\dagger \mathbf{w}\vert^2$ is a squared inner
product of two independent random vectors isotropic in
$\mathbb{C}^M$, so follows the beta distribution%
\footnote{The probability density function of beta distributed
random variable $S$ with parameters ($a,b$) becomes
$f_S(s)=\frac{\Gamma(a+b)}{\Gamma(a)\Gamma(b)}s^{a-1}(1-s)^{b-1}$
\cite[p.635]{Z2003}.}
with parameters $(M-1,1)$ \cite{J2006, AL2007}. Consequently, a
quantization error using $b_k$-bit RVQ, $Z_k$, becomes the minimum
of $2^{b_k}$ independent beta distributed random variables with
parameters $(M-1,1)$.
Correspondingly the complementary cumulative density function (CDF)
of $Z_k$ is given by \cite{AL2007}
\begin{align}
   \PR{Z_k>z}=\left(1-z^{M-1}\right)^{2^{b_k}}.
    \label{eqn:QE_CDF}
\end{align}

\subsection{Feedback Rate Sharing Strategy}
We assume an \emph{average} feedback size allocated for each user is
$\bar{b}$ so that the total feedback rate (i.e., the sum of all
individual users' feedback rates) becomes $K\bar{b}$ bits per
channel realization.
Assuming the feedback rate sharing among users, each user uses
$b_k$-bit feedback and the sum feedback rate constraint becomes
$\sum_{k=1}^K b_k = K\bar{b}$.
Since codebook size is typically a non-negative integer number of
bits, we restrict the average feedback size, $\bar{b}$, as an
positive integer, i.e., $\bar{b}\in \mathbb{Z}^+$.
For the same reason, we assume the feedback size at the $k$th user,
$b_k$, as a non-negative integer, i.e., $b_k\in \{0\} \cup
\mathbb{Z}^+$ for $k=1, \ldots, K$,

From individual feedback rates, a feedback rate sharing strategy can
be expressed by $K$-dimensional vector
\begin{align}
    \mathbf{b}=[b_1, \ldots, b_K],
\end{align}
and the sum feedback rate constraint becomes $\Vert \mathbf{b}
\Vert_1 = K\bar{b}$ where $\Vert\cdot\Vert_1$ is the vector one
norm.

From \eqref{eqn:y_k}, we obtain the average sum rate as a function
of transmit power, $P$, and the sum feedback rate sharing strategy,
$\mathbf{b}$, denoted by $\mathcal{R}(P,\mathbf{b})$ given by
\begin{align}
    \mathcal{R}(P, \mathbf{b})
    = \sum_{k=1}^{K}\mathbb{E}\left[
    \log_2\left(
    1+\frac{\frac{P}{M} |\mathbf{h}_k^\dagger \mathbf{v}_k|^2}
    {1+\sum_{i\ne k}\frac{P}{M} |\mathbf{h}_k^\dagger \mathbf{v}_i|^2}
    \right) \right].
    \label{eqn:sum_rate}
\end{align}
Thus, we solve the following problem:
\begin{align}
    \underset{\mathbf{b}=[b_1, \ldots, b_K]}{\textrm{maximize}}
        &\qquad \mathcal{R}(P, \mathbf{b}) \label{eqn:optimization_problem}\\
    \textrm{subject to}
        &\qquad \sum_{k=1}^K b_k = K\bar{b},\label{eqn:constraint1}\\
        &\qquad b_k\in \{0\} \cup \mathbb{Z}^+ \quad k=1,\ldots, K.
        \label{eqn:constraint2}
\end{align}
Note that the optimal sum feedback rate sharing strategy will be
derived later and shown to be dependent on the SNR value. Therefore,
the feedback bits are reallocated each time when the SNR changes.
In practical scenarios, several allocation patterns can be
constructed offline for typical SNR values and then the transmitter
can broadcast an appropriate allocation pattern using the current
SNR.

\section{Impacts of Asymmetric Feedback Sizes among Users}

To find the optimal feedback rate sharing strategy, we first analyze
the impact of asymmetric feedback sizes among the users on the sum
rate.
For the simplicity, we define three random variables
\begin{align}
    Q_k \triangleq \Vert\mathbf{h}_k\Vert^2,\quad
    X_k \triangleq |\tilde{\mathbf{h}}_k^\dagger \mathbf{v}_k|^2,\quad
    Y_k \triangleq \sum_{i\ne k}|\tilde{\mathbf{h}}_k^\dagger
    \mathbf{v}_i|^2,
    \label{eqn:three_RV}
\end{align}
where
$Q_k$ is the $k$th channel gain, $X_k$ is the squared inner product
between the $k$th normalized channel vector and the $k$th
beamforming vector, and $Y_k$ is the sum of the squared inner
products between the $k$th normalized channel vector and the other
beamforming vectors.
Note that $X_k$ is not affected by the feedback size of the $k$th
user since $\mathbf{v}_k$ is selected in the null space of
$\{\hat{\mathbf{h}}_i\}_{i\neq k}$.

Using the quantization error $Z_k$ defined in \eqref{eqn:Z_k}, we
can decompose $\tilde{\mathbf{h}}_k$ into $\tilde{\mathbf{h}}_k =
\sqrt{1-Z_k} \hat{\mathbf{h}}_k + \sqrt{Z_k}\mathbf{e}_k$ where
$\mathbf{e}_k$ is an unit vector such that $|\hat{\mathbf{h}}_k
^\dagger\mathbf{e}_k|^2=0$.
The random variable $Y_k$ becomes
\begin{align}
 Y_k
    &=\sum_{i\ne k}\left|\left(\sqrt{1-Z_k}\hat{\mathbf{h}}_k+\sqrt{Z_k}
        \mathbf{e}_k\right)^\dagger\mathbf{v}_i\right|^2 \label{eqn:quantization}\\
    &=Z_k\sum_{i\ne k}|\mathbf{e}_k^\dagger\mathbf{v}_i|^2 \label{eqn:orthogonality1}\\
    &=Z_k\cdot W_k,
\end{align}
where the random variable $W_k \triangleq \sum_{i\ne k}|
\mathbf{e}_k ^\dagger \mathbf{v}_i|^2$ is the sum of the square of
inner products between the quantization error vector $\mathbf{e}_k$
and the beamforming vectors of other users $\{\mathbf{v}_i\}_{i\ne
k}$. The independency between $Z_k$ and $|\mathbf{e} ^{\dagger}_k
\mathbf{v}_i|^2$ is shown in [12] from the fact that the magnitude
of the quantization error, $Z_k$ is independent of the direction of
quantization error, $\mathbf{e}_k$. Thus, we can easily find that
$Z_k$ and $W_k (=\sum_{i\ne k}|\mathbf{e} ^{\dagger}_k
\mathbf{v}_i|^2)$ are independent.
We start from the following lemma.
%

\begin{lemma}\label{lemma:RV_properties}
The random variables $Q_k$, $X_k$, $W_k$ and $Z_k$ have following
properties.
\begin{enumerate}
\item Invariant with the feedback sizes, $b_1, \ldots, b_K$, the
distributions of $Q_k$, $X_k$, and $W_k$ are identical for all
users, respectively, i.e.,
\begin{align}
    f_{Q_k}(q) = f_{Q_1}(q),\quad
    f_{X_k}(x) = f_{X_1}(x),\quad \NNL
    f_{W_k}(w) =f_{W_1}(w), \quad  k = 2, \ldots, K,\NN
\end{align}
where $f_{Q_k}(q)$, $f_{X_k}(x)$, and $f_{W_k}(w)$ are the marginal
PDFs of $Q_k$, $X_k$, $W_k$, respectively,

\item $Q_k$, $X_k$, and $W_k$ are independent of $Z_k$, respectively.

\item The joint PDF of $Q_k$, $X_k$, and $W_k$ are identical for all
users, i.e.,
\begin{align}
 f_{Q_k, X_k, W_k}(q, x, w) = f_{Q_1, X_1, W_1}(q, x, w), \NN
\end{align}
where $f_{Q_k,X_k,W_k}(q,x,w)$ is the joint PDF of $Q_k$, $X_k$, and
$W_k$.

\end{enumerate}

\end{lemma}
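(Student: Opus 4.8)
The plan is to reduce every assertion to two structural facts about the construction. First, \emph{rotational invariance}: for random vector quantization $\hat{\mathbf{h}}_k=\arg\max_{\mathbf{w}\in\mathcal{W}_k}|\mathbf{h}_k^\dagger\mathbf{w}|^2$ is marginally isotropic in $\mathbb{C}^M$ for \emph{every} value of $b_k$ (the argmax is equivariant under a common unitary rotation of $\mathbf{h}_k$ and of all codewords, and a unit vector with a unitarily invariant law is isotropic), and the $\hat{\mathbf{h}}_k$'s are mutually independent because the pairs $(\mathbf{h}_k,\mathcal{W}_k)$ are; hence the law of $\hat{\mathbf{H}}$, and therefore of $(\mathbf{V},\mathbf{\Lambda})$, is the same for all $\mathbf{b}$ and is invariant under relabelling the users. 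Second, the \emph{zero-forcing identity} $\hat{\mathbf{h}}_i^\dagger\mathbf{v}_j=\delta_{ij}/\lambda_j$, read off from $\hat{\mathbf{H}}\hat{\mathbf{H}}^{-1}=\mathbf{I}$, which says that $\mathbf{v}_k$ (up to a phase that does not affect $X_k$) is a deterministic function of $\{\hat{\mathbf{h}}_i\}_{i\neq k}$ alone, and that $\mathbf{v}_i\in\hat{\mathbf{h}}_k^\perp$ for every $i\neq k$. I will also use the decomposition $\tilde{\mathbf{h}}_k=\sqrt{1-Z_k}\,\hat{\mathbf{h}}_k+\sqrt{Z_k}\,\mathbf{e}_k$ in which $(\hat{\mathbf{h}}_k,\mathbf{e}_k)$ is a uniformly distributed orthonormal $2$-frame independent of $Z_k$ (so $\mathbf{e}_k\mid\hat{\mathbf{h}}_k$ is isotropic in $\hat{\mathbf{h}}_k^\perp$, with law free of $b_k$), jointly independent of all the other users' data, and the known independence of $Z_k$ from the error direction $\mathbf{e}_k$. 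Finally, from \eqref{eqn:QE_CDF} one has $\PR{Z_k>z\mid\tilde{\mathbf{h}}_k}=(1-z^{M-1})^{2^{b_k}}$, which does not depend on the realized value of $\tilde{\mathbf{h}}_k$; this gives the ``wash-out'' principle: any random variable that is a measurable function of $\tilde{\mathbf{h}}_k$ is independent of $Z_k$.

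Granting this, parts of the lemma are short. For $Q_k=\Vert\mathbf{h}_k\Vert^2$: the norm of a circularly symmetric Gaussian vector is independent of its direction $\tilde{\mathbf{h}}_k$ and of $\mathcal{W}_k$, hence of all the quantized channels, of $\{\mathbf{v}_j\}$, and of $Z_k$; its law is Gamma$(M,1)$ (a scaled $\chi^2$ with $2M$ degrees of freedom) and is plainly the same for all users. For $X_k=|\tilde{\mathbf{h}}_k^\dagger\mathbf{v}_k|^2$: conditioning on $\{\hat{\mathbf{h}}_i\}_{i\neq k}$ fixes $\mathbf{v}_k$ (up to phase) while leaving $\tilde{\mathbf{h}}_k$ isotropic in $\mathbb{C}^M$, since $\tilde{\mathbf{h}}_k$ is independent of the other users' data; therefore $X_k\sim\mathrm{beta}(1,M-1)$, a law depending neither on $b_k$ (the precoder $\mathbf{v}_k$ does not involve $\hat{\mathbf{h}}_k$), nor on any $b_i$, nor on the particular value of $\mathbf{v}_k$, and identical for all $k$. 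Moreover, conditionally on $\{\hat{\mathbf{h}}_i\}_{i\neq k}$, $X_k$ is a function of $\tilde{\mathbf{h}}_k$ while $Z_k$ is a function of $(\tilde{\mathbf{h}}_k,\mathcal{W}_k)$ and is unaffected by the conditioning, so the wash-out principle gives $X_k\perp Z_k$ given $\{\hat{\mathbf{h}}_i\}_{i\neq k}$; since neither conditional law depends on the conditioning, $(X_k,Z_k)$ has product law unconditionally.

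For $W_k=\sum_{i\neq k}|\mathbf{e}_k^\dagger\mathbf{v}_i|^2$, the independence $W_k\perp Z_k$ follows because $Z_k$ is independent of the whole tuple $(\hat{\mathbf{h}}_1,\ldots,\hat{\mathbf{h}}_K,\mathbf{e}_k)$: $Z_k\perp\hat{\mathbf{h}}_k$ marginally (by rotational invariance), and given $\hat{\mathbf{h}}_k$, $Z_k$ is independent of $\mathbf{e}_k$ (the decomposition result) and of $\{\hat{\mathbf{h}}_i\}_{i\neq k}$ (user $k$'s randomness is independent of the others'), hence of that tuple jointly; $W_k$ is a deterministic function of it. For the law of $W_k$: the tuple $(\hat{\mathbf{h}}_1,\ldots,\hat{\mathbf{h}}_K,\mathbf{e}_k)$ is ``$K$ i.i.d.\ isotropic vectors together with $\mathbf{e}_k$ isotropic in $\hat{\mathbf{h}}_k^\perp$'', a law that does not depend on $\mathbf{b}$ and is invariant under simultaneously relabelling the $\hat{\mathbf{h}}_l$'s and the distinguished index $k$; since $W_k$ is the value of one fixed permutation-equivariant functional evaluated on this tuple, its law is $\mathbf{b}$-free and the same for every user. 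Running the same argument on the stacked tuple also shows that the joint law of $(\hat{\mathbf{h}}_1,\ldots,\hat{\mathbf{h}}_K,\mathbf{e}_k)$ is the same for all $k$, which feeds into part~3.

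Part~3 is the delicate one. Since $Q_k$ is independent of all directional quantities, $f_{Q_k,X_k,W_k}=f_{Q_k}\cdot f_{X_k,W_k}$ with $f_{Q_k}$ already shown user-independent, so it suffices that $(X_k,W_k)$ has the same joint law for all $k$. The obstruction is that, unlike $W_k$, the variable $X_k$ also sees $Z_k$ and $\hat{\mathbf{h}}_k$ through $\tilde{\mathbf{h}}_k=\sqrt{1-Z_k}\hat{\mathbf{h}}_k+\sqrt{Z_k}\mathbf{e}_k$, while $\hat{\mathbf{h}}_k$ in turn enters $\mathbf{V}$, so one cannot simply invoke exchangeability of the users, whose feedback sizes may differ. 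The route I would take is to condition on the entire collection $\{\hat{\mathbf{h}}_l\}_{l=1}^{K}$, which fixes $\mathbf{V}$ and $\mathbf{\Lambda}$; using $\hat{\mathbf{h}}_k^\dagger\mathbf{v}_k=1/\lambda_k$ and $\mathbf{e}_k\perp\hat{\mathbf{h}}_k$ one gets $X_k=|\sqrt{1-Z_k}/\lambda_k+\sqrt{Z_k}\,\mathbf{e}_k^\dagger\mathbf{v}_k|^2$ and $W_k=\sum_{i\neq k}|\mathbf{e}_k^\dagger\mathbf{v}_i|^2$, so that, conditionally, $(X_k,W_k)$ is one fixed functional of the pair $(Z_k,\mathbf{e}_k)$ together with the fixed data $(\lambda_k,\mathbf{v}_k,\{\mathbf{v}_i\}_{i\neq k})$, where $\mathbf{e}_k$ is isotropic in $\hat{\mathbf{h}}_k^\perp$, independent of $Z_k$, with a conditional law that depends on the conditioning only through the subspace $\hat{\mathbf{h}}_k^\perp$. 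One then matches the conditional joint law of $(X_k,W_k)$ with that of $(X_{k'},W_{k'})$ by relabelling the users --- using the permutation-equivariance of $l\mapsto(\mathbf{v}_l,\lambda_l)$ and the symmetry of the law of $\{\hat{\mathbf{h}}_l\}_l$ --- and integrates over $\{\hat{\mathbf{h}}_l\}_l$. The step that requires care, and which I expect to be the main obstacle, is that this matching must absorb the relabelling of the feedback sizes; the argument is therefore organised so that $b_k$ enters only through the law of $Z_k$, and the map $(Z_k,\mathbf{e}_k;\lambda_k,\mathbf{v}_k,\{\mathbf{v}_i\}_{i\neq k})\mapsto(X_k,W_k)$ is manifestly the same symmetric functional for every user, so that once the conditioning is correctly chosen the $b_k$-dependence is controlled by the rotational and permutation symmetries rather than entering the final joint law.
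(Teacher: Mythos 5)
For parts 1 and 2 your argument is correct and follows essentially the same route as the paper's, only more carefully: the paper asserts the isotropy of $\hat{\mathbf{h}}_k$, of the $\mathbf{v}_i$'s, and of $\mathbf{e}_k$ in $\hat{\mathbf{h}}_k^\perp$, and disposes of the independence from $Z_k$ with the remark that ``$Z_k$ is only dependent on $b_k$''; your rotational-invariance and wash-out arguments supply exactly the justifications the paper omits.

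The gap is in part 3, and it sits precisely at the step you yourself flag as delicate. Conditioning on $\{\hat{\mathbf{h}}_l\}_{l}$ you obtain $(X_k,W_k)=\Phi\bigl(Z_k,\mathbf{e}_k;\lambda_k,\mathbf{v}_k,\{\mathbf{v}_i\}_{i\neq k}\bigr)$ with $Z_k$ independent of all other ingredients and with every ingredient except $Z_k$ having a law that is free of $\mathbf{b}$ and exchangeable in $k$. But the push-forward of a fixed functional of this tuple still depends on the law of $Z_k$, hence on $b_k$, unless specific cancellations occur. They do occur for the marginal of $X_k$ --- provable by your direct argument ($\tilde{\mathbf{h}}_k$ isotropic and independent of $\mathbf{v}_k$ gives $\mathrm{beta}(1,M-1)$ without ever touching the decomposition) --- but nothing forces them for the joint law of $(X_k,W_k)$. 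Concretely, since the phase of $\mathbf{e}_k$ is uniform, $\mathbb{E}[X_kW_k\mid Z_k=z]=(1-z)\,\mathbb{E}[\lambda_k^{-2}W_k]+z\,\mathbb{E}[|\mathbf{e}_k^\dagger\mathbf{v}_k|^2W_k]$; the first coefficient equals $\mathbb{E}[\lambda_k^{-2}]=1/M$ because $\mathbb{E}[W_k\mid\hat{\mathbf{H}}]=\mathrm{tr}\bigl(\sum_{i\neq k}\mathbf{v}_i\mathbf{v}_i^\dagger\bigr)/(M-1)=1$, whereas the second is a fourth moment of $\mathbf{e}_k$ that picks up an extra $\sum_{i\neq k}\mathbb{E}|\mathbf{v}_k^\dagger\mathbf{v}_i|^2$ contribution from the shared randomness of $\mathbf{e}_k$, and the two do not coincide for $M\ge 3$. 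So the relabelling matches every ingredient except the law of $Z_k$, and $Z_k$ genuinely enters $X_k$: ``the $b_k$-dependence is controlled by the rotational and permutation symmetries'' is the assertion that fails, not a routine verification. For comparison, the paper's own proof of part 3 is a one-sentence exchangeability claim (swap the indices of $\mathbf{h}_i$ and $\mathbf{h}_k$, ``whose distributions are the same''), which silently assumes the users are exchangeable --- exactly what unequal feedback sizes destroy --- and the closing paragraph of Appendix~A quietly restricts to the equal-feedback case. Your instinct that part 3 is the obstacle is right, but neither your sketch nor the paper's argument closes it; a complete proof must either establish the nontrivial cancellation identities above or weaken part 3 to what Lemma~\ref{lemma:capacity_k} actually requires.
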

\begin{proof}
See Appendix A. 
\end{proof}
\begin{lemma}\label{lemma:capacity_k}
The achievable rate of the $k$th user is determined by only its own
feedback size $b_k$ and is independent of the other users' feedback
sizes $ \{b_i\}_{i\ne k}$.
\end{lemma}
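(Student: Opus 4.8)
The plan is to express the $k$th user's achievable rate entirely in terms of the four scalar random variables $Q_k$, $X_k$, $W_k$, and $Z_k$ introduced above, and then read off from Lemma~\ref{lemma:RV_properties} together with \eqref{eqn:QE_CDF} that the distribution governing that rate is controlled by $b_k$ alone.

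First I would substitute $\mathbf{h}_k=\sqrt{Q_k}\,\tilde{\mathbf{h}}_k$ into the $k$th summand of \eqref{eqn:sum_rate}. The desired-signal power becomes $\tfrac{P}{M}|\mathbf{h}_k^\dagger\mathbf{v}_k|^2=\tfrac{P}{M}Q_kX_k$ by the definition of $X_k$ in \eqref{eqn:three_RV}, while the zero-forcing construction gives $\mathbf{v}_i\perp\hat{\mathbf{h}}_k$ for every $i\ne k$, so the decomposition $\tilde{\mathbf{h}}_k=\sqrt{1-Z_k}\,\hat{\mathbf{h}}_k+\sqrt{Z_k}\,\mathbf{e}_k$ used in \eqref{eqn:quantization}--\eqref{eqn:orthogonality1} collapses the interference sum $\sum_{i\ne k}|\mathbf{h}_k^\dagger\mathbf{v}_i|^2$ to $Q_kY_k=Q_kZ_kW_k$. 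Hence the $k$th rate can be written as
\[
    R_k(\mathbf{b})\triangleq\mathbb{E}\!\left[\log_2\!\left(1+\frac{\tfrac{P}{M}Q_kX_k}{1+\tfrac{P}{M}Q_kZ_kW_k}\right)\right]=\mathbb{E}\big[g(Q_k,X_k,W_k,Z_k)\big],
\]
for a fixed function $g$ not depending on $\mathbf{b}$, with $\mathcal{R}(P,\mathbf{b})=\sum_{k=1}^{K}R_k(\mathbf{b})$.

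Next I would invoke Lemma~\ref{lemma:RV_properties}: the triple $(Q_k,X_k,W_k)$ is jointly independent of $Z_k$ and its joint PDF is the same for every user and does not depend on any of $b_1,\dots,b_K$, whereas \eqref{eqn:QE_CDF} shows the marginal law of $Z_k$ is determined solely by $b_k$. Consequently
\[
    R_k(\mathbf{b})=\iiiint g(q,x,w,z)\,f_{Q_1,X_1,W_1}(q,x,w)\,f_{Z_k}(z)\,dq\,dx\,dw\,dz,
\]
and the only dependence on $\mathbf{b}$ is through $f_{Z_k}$, i.e.\ through $b_k$; thus $R_k(\mathbf{b})=R_k(b_k)$ does not involve $\{b_i\}_{i\ne k}$, which is the assertion.

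The hard part is the \emph{joint} independence of $(Q_k,X_k,W_k)$ from $Z_k$: the factorization in the last display requires it, and separate pairwise independence of each of $Q_k$, $X_k$, $W_k$ from $Z_k$ would not by itself be enough. This is exactly what Lemma~\ref{lemma:RV_properties} (proved in Appendix~A) supplies --- intuitively because $Z_k$ is a function of only the \emph{magnitude} of the quantization error, while $Q_k$ is a function of the channel norm, $X_k$ of $\tilde{\mathbf{h}}_k$ and the interference-free beamformer $\mathbf{v}_k$, and $W_k$ of the quantization-error \emph{direction} $\mathbf{e}_k$ together with the other users' beamformers $\{\mathbf{v}_i\}_{i\ne k}$, none of which involves that magnitude. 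A secondary point worth flagging is that $W_k$ does depend on the other users' quantized channels and could a priori vary with $\{b_i\}_{i\ne k}$; that its distribution is in fact $\mathbf{b}$-invariant is again part of Lemma~\ref{lemma:RV_properties} and should be cited rather than re-derived here.
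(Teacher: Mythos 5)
Your proof is correct and follows essentially the same route as the paper's: rewrite the $k$th summand of \eqref{eqn:sum_rate} as $\mathbb{E}\bigl[\log_2\bigl(1+\frac{(P/M)Q_1X_1}{1+(P/M)Q_1W_1Z_k}\bigr)\bigr]$ using the decomposition $Y_k=Z_kW_k$, then invoke Lemma~\ref{lemma:RV_properties} to conclude that only the law of $Z_k$, hence only $b_k$, can influence this expectation. Your version is somewhat more explicit than the paper's (writing out the factorized integral and flagging that \emph{joint} independence of $(Q_k,X_k,W_k)$ from $Z_k$ is what the factorization actually requires), but these are refinements of the same argument rather than a different one.
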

\begin{proof}
From Lemma \ref{lemma:RV_properties}, we can rewrite the average sum
rate in \eqref{eqn:sum_rate} as
\begin{align}
    \mathcal{R}(P, \mathbf{b})
    &= \sum_{k=1}^{K}\mathbb{E}_{Q_k,X_k,W_k,Z_k}\left[
        \log_2\left(
        1+\frac{\frac{P}{M} Q_kX_k}{1+ \frac{P}{M} Q_k W_k Z_k}
        \right) \right]\NNL
    &= \sum_{k=1}^{K}\mathbb{E}_{Q_1,X_1,W_1,Z_k}\left[
        \log_2\left(
        1+\frac{\frac{P}{M} Q_1X_1}{1+ \frac{P}{M} Q_1 W_1 Z_k}
        \right) \right].\NN
\end{align}
Thus, the achievable rate at the $k$th user is dependent on only its
own feedback size because $Q_1$, $X_1$, and $W_1$ are not affected
by the feedback size as noted in Lemma \ref{lemma:RV_properties}.
Since the distribution of $Z_k$ is a function of $b_k$, the
achievable rate at each user is only affected by its own feedback
size.
\end{proof}

Thus, the achievable rate of the user $k$ becomes a function of
transmit power $P$ and own feedback size $b_k$ denoted by
$\mathcal{R}_k(P, b_k)$ such that
\begin{align}
 \mathcal{R}_k(P, b_k)
    = \mathbb{E}_{Q_1,X_1,W_1,Z_k}\left[\log_2\left(
    1+\frac{\frac{P}{M} Q_1X_1}{1+ \frac{P}{M} Q_1 W_1 Z_k}
    \right) \right], \label{eqn:capacity_k}
\end{align}
and it satisfies that $\mathcal{R}(P, \mathbf{b}) = \sum_{k=1}^K
\mathcal{R}_k(P, b_k)$.

To verify Lemma \ref{lemma:capacity_k}, two feedback scenarios
$\mathbf{b}_1=[10,10,10]$ and $\mathbf{b}_2=[10, 0, 0]$ are
considered in ZF MIMO BC with $M=3$, $K=3$.
In Fig. \ref{fig:capacity_user1},
the sum rate for the first scenario is much higher than that for the
second scenario due to the larger amount of total feedback
information. As predicted in Lemma \ref{lemma:capacity_k}, however,
the achievable rate of user 1 is the same in the two scenarios.

Lemma \ref{lemma:capacity_k} indicates that a feedback size of a
user does not affect the achievable rates of the other users and
only changes its own achievable rate.
Under a sum feedback rate constraint, an increase of one user's
feedback size necessarily decreases other users' feedback sizes.
With more accurate $\hat{\mathbf{h}}_k$, the transmitter can pick
the beamforming vectors of other users in more accurate null space
of the user $k$. Hence, the user $k$ benefits from less interference
from other users. On the other hand, the other users experience more
interference since the accuracy of the users' channel knowledge
degrades under the sum feedback rate constraint.
Consequently, when a user increases its own feedback size, the
achievable rate of the user increases but the achievable rates of
the other users decrease, and vice versa. The optimal feedback rate
sharing strategy starts from this fundamental tradeoff.

\section{Sum Feedback Rate Sharing Strategy}
\subsection{Low SNR Region}

In the low SNR region, the achievable rate of the $k$th user given
in \eqref{eqn:capacity_k} becomes
\begin{align}
  &\lim_{P\to 0}\mathcal{R}_k(P, b_k)\NNL
  &= \lim_{P\to 0}\mathbb{E}\bigg[\log_2\left(1+\frac{P}{M} Q_1 X_1
        + \frac{P}{M} Q_1 W_1  Z_k\right)\NNL
        &\qquad- \log_2\left(1+ \frac{P}{M} Q_1 W_1  Z_k\right)\bigg] \NNL
  &=\lim_{P\to 0}\mathbb{E}\bigg[\log_2\left(1+\frac{P}{M} Q_1 X_1\right)
    + \log_2\left(1+\frac{\frac{P}{M}Q_1W_1Z_k}{1+\frac{P}{M}Q_1X_1}\right)\NNL
    &\qquad- \log_2\left(1+\frac{P}{M}Q_1W_1Z_k\right) \bigg] \NNL
%
%
  &\stackrel{(a)}{=} \frac{1}{\ln 2}\mathbb{E}\left[\frac{P}{M} Q_1 X_1\right]
    -\frac{1}{\ln 2}\mathbb{E}\left[\frac{\frac{P^2}{M^2}Q_1^2X_1W_1Z_k}{1+\frac{P}{M}Q_1X_1}\right]
    \NNL
  &\stackrel{(b)}{=} \frac{1}{\ln 2}\mathbb{E}\left[\frac{P}{M} Q_1 X_1\right]
    -\frac{1}{\ln 2}\mathbb{E}\left[\frac{\frac{P^2}{M^2}Q_1^2X_1W_1}{1+\frac{P}{M}Q_1X_1}\right]
    \cdot \mathbb{E}[Z_k],\NN
\end{align}
where the equality $(a)$ holds because  $\lim_{x \to 0} \ln(1+x) =
x$, and the equality $(b)$ holds from the fact that $Z_k$ is
independent of $Q_k$, $X_k$, and $W_k$ from Lemma
\ref{lemma:RV_properties}.
In the low SNR region, therefore, the optimization problem
\eqref{eqn:optimization_problem} is equivalent with the following
problem:
\begin{align}
 \underset{\mathbf{b}=[b_1, \ldots, b_K]}{\textrm{minimize}} & \qquad \sum_{k=1}^K \mathbb{E}[Z_k]
 \label{eqn:optimization_problem_L}\\
    \textrm{subject to}
        &\qquad \eqref{eqn:constraint1},\eqref{eqn:constraint2}. \NN
\end{align}

\begin{definition}[Majorization]
For a vector $\mathbf{a}\in\mathbb{R}^m$, we denote by
$\mathbf{a}^\downarrow \in \mathbb{R}^m$ the vector with the same
components, but sorted in decreasing order. For given vectors
$\mathbf{a}_1, \mathbf{a}_2\in \mathbb{R}^m$ such that
$\Vert\mathbf{a}_1\Vert_1 = \Vert\mathbf{a}_2\Vert_1$, we say
$\mathbf{a}_1$ majorizes $\mathbf{a}_2$ written as
$\mathbf{a}_1\succeq\mathbf{a}_2$ when
\begin{align}
    \sum_{i=1}^n [\mathbf{a}_1^\downarrow]_i &\ge \sum_{i=1}^n [\mathbf{a}_2^\downarrow]_i
        \qquad 1\le n \le m,
\end{align}
where $[\cdot]_i$ denotes the $i$th component of the vector.
\end{definition}

\begin{theorem}[Strategy in the Low SNR Region]\label{theorem:strategy_low}
Using RVQ in the low SNR region, feedback rate sharing strategy
$\mathbf{b}_1$ achieves higher average sum rate than feedback rate
sharing strategy $\mathbf{b}_2$ whenever $\mathbf{b}_1 \preceq
\mathbf{b}_2$, i.e.,
\begin{align}
 \lim_{P\to 0} \mathcal{R}(P,\mathbf{b}_1)
    \ge \lim_{P\to 0} \mathcal{R}(P,\mathbf{b}_2) \quad\textrm{for all}\quad
    \mathbf{b}_1 \preceq \mathbf{b}_2.
\end{align}
\end{theorem}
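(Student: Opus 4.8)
The plan is to build on the reduction just derived. In the low-SNR regime, $\mathcal{R}(P,\mathbf{b})$ equals a $\mathbf{b}$-independent quantity minus a \emph{strictly positive} multiple of $\sum_{k=1}^{K}\mathbb{E}[Z_k]$, so maximizing the average sum rate is equivalent to minimizing $\sum_{k=1}^{K}\mathbb{E}[Z_k]$. By Lemmas~\ref{lemma:RV_properties} and~\ref{lemma:capacity_k}, each $\mathbb{E}[Z_k]$ depends only on $b_k$, and from the complementary CDF \eqref{eqn:QE_CDF} we may write $\mathbb{E}[Z_k]=g(b_k)$ with
\begin{align}
 g(b)=\int_0^1\left(1-z^{M-1}\right)^{2^{b}}dz. \NN
\end{align}
Hence the theorem follows once the separable, symmetric map $\Phi(\mathbf{b})\triangleq\sum_{k=1}^{K}g(b_k)$ is shown to be Schur-convex on the set of non-negative integer vectors with $\Vert\mathbf{b}\Vert_1=K\bar{b}$: then $\mathbf{b}_1\preceq\mathbf{b}_2$ yields $\Phi(\mathbf{b}_1)\le\Phi(\mathbf{b}_2)$, and therefore $\lim_{P\to0}\mathcal{R}(P,\mathbf{b}_1)\ge\lim_{P\to0}\mathcal{R}(P,\mathbf{b}_2)$.

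To prove Schur-convexity of the separable sum, I would invoke the discrete characterization of majorization: for integer vectors with the same one-norm, $\mathbf{b}_1\preceq\mathbf{b}_2$ holds if and only if $\mathbf{b}_1$ is reachable from $\mathbf{b}_2$ by a finite sequence of elementary transfers, each moving a single bit from a strictly larger coordinate to a strictly smaller one. Since $\Phi$ is symmetric and additively separable, it then suffices to check that $\Phi$ does not decrease under the inverse of one such transfer, i.e. that for all integers $p>q\ge 0$
\begin{align}
 g(p-1)+g(q+1)\;\le\;g(p)+g(q), \NN
\end{align}
equivalently $g(q+1)-g(q)\le g(p)-g(p-1)$. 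Because $p-1\ge q$, this inequality holds provided the forward difference $\Delta g(b)\triangleq g(b+1)-g(b)$ is non-decreasing on $\{0,1,2,\ldots\}$, i.e. provided $g$ is (discretely) convex. Thus the entire statement collapses to the single claim that $\mathbb{E}[Z_k]$ is convex in the integer feedback size $b_k$.

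For this convexity I would work from the closed form obtained by the substitution $u=z^{M-1}$,
\begin{align}
 g(b)=\frac{1}{M-1}\,B\!\left(\tfrac{1}{M-1},\,2^{b}+1\right)
     =\Gamma\!\left(\tfrac{M}{M-1}\right)\frac{\Gamma\!\left(2^{b}+1\right)}{\Gamma\!\left(2^{b}+\tfrac{M}{M-1}\right)}, \NN
\end{align}
together with the monotonicity and log-convexity of the Gamma ratio $r(x)\triangleq\Gamma(x)/\Gamma(x+\beta)$ with $\beta=\tfrac{1}{M-1}$: since $\tfrac{d}{dx}\ln r(x)=\psi(x)-\psi(x+\beta)<0$ and $\tfrac{d^{2}}{dx^{2}}\ln r(x)=\psi'(x)-\psi'(x+\beta)>0$, where $\psi$ and $\psi'$ are the digamma and trigamma functions, $r$ is positive, strictly decreasing and log-convex on $(0,\infty)$. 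The hard part will be to transfer these properties through the composition $x=2^{b}+1$ so as to conclude $g(b+1)-2g(b)+g(b-1)\ge 0$ on the relevant integer range: a decreasing convex function composed with the convex map $b\mapsto 2^{b}$ need not itself be convex, so no generic composition rule applies, and one must instead exploit quantitative estimates on $\psi$ and $\psi'$ (for instance, $\psi'(x)$ decays like $1/x$ and $\psi(x)-\psi(x+\beta)=-\beta/x+O(1/x^{2})$), under which $g(b)$ is, up to lower-order corrections, the convex exponential $c\cdot 2^{-b/(M-1)}$ in $b$. I expect essentially all of the work to reside in this last estimate; the reduction of majorization to elementary transfers and the Schur-convexity bookkeeping are routine.
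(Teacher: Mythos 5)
Your overall route coincides with the paper's: the low-SNR expansion reduces the problem to minimizing $\sum_{k=1}^{K}\mathbb{E}[Z_k]$, and the ordering under majorization follows from Schur-convexity of a separable sum of a discretely convex function (your elementary-transfer argument is just the discrete proof of the fact the paper quotes from majorization theory). Everything therefore hinges on the one lemma you do not prove: that $g(b)=\mathbb{E}[Z_k\,\vert\,b_k=b]=\Gamma\!\left(\tfrac{M}{M-1}\right)\Gamma\!\left(2^{b}+1\right)/\Gamma\!\left(2^{b}+\tfrac{M}{M-1}\right)$ is discretely convex in the integer $b$. You correctly observe that monotonicity and log-convexity of the Gamma ratio in its continuous argument do not transfer through the composition with $b\mapsto 2^{b}$, and you then defer ``essentially all of the work'' to digamma/trigamma estimates that you do not carry out. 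That is a genuine gap, not a routine omission: the approximation $g(b)\approx c\cdot 2^{-b/(M-1)}$ only controls the sign of the second difference when $2^{b}$ is large, whereas convexity must hold for every integer $b\ge 0$, and the small values of $b$ (exactly the ones that occur for realistic feedback budgets) are precisely where the uncomputed corrections are not negligible.

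The paper closes this step with a short exact computation that you could substitute for your estimate. Repeated use of $\Gamma(x+1)=x\Gamma(x)$ gives the recursion $g(b+1)=\eta_b\, g(b)$ with
\begin{align}
 \eta_b=2\prod_{i=2^{b}}^{2^{b+1}-1}\frac{i}{\,i+\frac{M}{M-1}\,},\NN
\end{align}
so the second difference equals $\left(\eta_{b+1}\eta_b-2\eta_b+1\right)g(b)$ and discrete convexity reduces to the scalar inequality $\eta_{b+1}\eta_b-2\eta_b>-1$, which the paper establishes by bounding the two finite products (showing $\eta_{b+1}\eta_b-2\eta_b$ lies in $[-1,0]$). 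Until you supply an argument of this kind --- or complete your asymptotic estimate with explicit error bounds valid down to $b=0$ --- your proof is incomplete at its only nontrivial point; the reduction and the Schur-convexity bookkeeping, as you say, are routine and match the paper.
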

%
\begin{proof}
See Appendix B. 
\end{proof}
\begin{corollary}\label{corollary:strategy_low}
In the low SNR region, when the sum feedback rates is $K\bar{b}$
(i.e., $\sum b_k=K\bar{b}$), the optimal feedback rate sharing
strategy is to allocate the same amount of feedback ($b_k=\bar{b}$)
to all users while the worst strategy is to allocate whole feedback
amount $K\bar{b}$ to a single user.
\end{corollary}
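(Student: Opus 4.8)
The plan is to obtain Corollary~\ref{corollary:strategy_low} as an immediate consequence of Theorem~\ref{theorem:strategy_low}, using only two elementary facts about the majorization order restricted to the feasible set of \eqref{eqn:optimization_problem}: within that set the uniform allocation is the $\preceq$-smallest element and the single-user allocation is the $\preceq$-largest element. So the whole argument reduces to locating these two extreme points of the constraint polytope and then reading off the rate inequalities from Theorem~\ref{theorem:strategy_low}.

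First I would single out the two candidate strategies $\mathbf{b}^{\mathrm{eq}}=[\bar b,\ldots,\bar b]$ and $\mathbf{b}^{\mathrm{sg}}=[K\bar b,0,\ldots,0]$ and check that each satisfies \eqref{eqn:constraint1}--\eqref{eqn:constraint2}: both have $\ell_1$-norm $K\bar b$ and nonnegative integer entries, the admissibility of $\mathbf{b}^{\mathrm{eq}}$ being precisely where the standing assumption $\bar b\in\mathbb{Z}^+$ is invoked. Next, for an arbitrary feasible $\mathbf{b}$ with $\Vert\mathbf{b}\Vert_1=K\bar b$ I would establish the chain $\mathbf{b}^{\mathrm{eq}}\preceq\mathbf{b}\preceq\mathbf{b}^{\mathrm{sg}}$. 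For the left relation, for every $1\le n\le K$ the sum of the $n$ largest entries of $\mathbf{b}$ is at least $n$ times the entrywise average $\tfrac1K\Vert\mathbf{b}\Vert_1=\bar b$, and $n\bar b=\sum_{i=1}^n[(\mathbf{b}^{\mathrm{eq}})^{\downarrow}]_i$, so $\sum_{i=1}^n[\mathbf{b}^{\downarrow}]_i\ge\sum_{i=1}^n[(\mathbf{b}^{\mathrm{eq}})^{\downarrow}]_i$, which by the definition of majorization is $\mathbf{b}\succeq\mathbf{b}^{\mathrm{eq}}$. For the right relation, since every entry of $\mathbf{b}$ is nonnegative, the sum of its $n$ largest entries never exceeds $\Vert\mathbf{b}\Vert_1=K\bar b=\sum_{i=1}^n[(\mathbf{b}^{\mathrm{sg}})^{\downarrow}]_i$ (the sorted single-user vector has all partial sums equal to $K\bar b$), so $\mathbf{b}^{\mathrm{sg}}\succeq\mathbf{b}$.

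Finally I would invoke Theorem~\ref{theorem:strategy_low} twice. Taking $(\mathbf{b}_1,\mathbf{b}_2)=(\mathbf{b}^{\mathrm{eq}},\mathbf{b})$ gives $\lim_{P\to0}\mathcal{R}(P,\mathbf{b}^{\mathrm{eq}})\ge\lim_{P\to0}\mathcal{R}(P,\mathbf{b})$ for every feasible $\mathbf{b}$, so equal sharing maximizes the low-SNR sum rate; taking $(\mathbf{b}_1,\mathbf{b}_2)=(\mathbf{b},\mathbf{b}^{\mathrm{sg}})$ gives $\lim_{P\to0}\mathcal{R}(P,\mathbf{b})\ge\lim_{P\to0}\mathcal{R}(P,\mathbf{b}^{\mathrm{sg}})$, so concentrating the whole budget on one user minimizes it; both extremal strategies lie in the feasible set, hence these bounds are attained and the corollary follows.

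The argument has no genuinely hard step; the two points that need care are (i) verifying that $\mathbf{b}^{\mathrm{eq}}$ and $\mathbf{b}^{\mathrm{sg}}$ really lie in the integer-constrained feasible set rather than only in its continuous relaxation, and (ii) keeping the orientation of the majorization inequalities aligned with the hypothesis $\mathbf{b}_1\preceq\mathbf{b}_2$ of Theorem~\ref{theorem:strategy_low} (it is the more spread-out vector that wins, so $\mathbf{b}^{\mathrm{eq}}$ must appear as $\mathbf{b}_1$ and $\mathbf{b}^{\mathrm{sg}}$ as $\mathbf{b}_2$). If one wants optimality/worst-case to be unique up to a relabeling of users, this can be appended by observing that the reduction leading to \eqref{eqn:optimization_problem_L} exhibits the limiting sum rate as a strictly Schur-concave function of $\mathbf{b}$ (since $b\mapsto\mathbb{E}[Z_k]$ is strictly decreasing and convex in the RVQ model), so strict majorization yields a strict rate gap whenever $\mathbf{b}$ is not a permutation of the relevant extreme point.
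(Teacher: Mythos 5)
Your proposal is correct and follows essentially the same route as the paper: the paper's proof likewise observes that every feasible $\mathbf{b}$ satisfies $[\bar b,\ldots,\bar b]\preceq\mathbf{b}\preceq[K\bar b,0,\ldots,0]$ and then applies Theorem~\ref{theorem:strategy_low}. The only difference is that you explicitly verify the partial-sum inequalities defining the majorization chain (and add an optional uniqueness remark), whereas the paper states the chain without proof.
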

\begin{proof}
All possible feedback sharing strategies $\mathbf{b}$ ($\Vert
\mathbf{b} \Vert_1 = K\bar{b}$) satisfy that
\begin{align}
 [\bar{b}, \ldots, \bar{b}] \preceq \mathbf{b} \preceq [K\bar{b}, 0, \ldots, 0].
\end{align}
Thus, the optimal feedback sharing strategy in low SNR region is to
allocate the same feedback size to all users while the worst
strategy is to allocate the whole feedback size to a single user.
\end{proof}

\subsection{High SNR Region}\label{section:high_SNR_region}
With fixed feedback size in the high SNR region, the sum rate of a
MIMO BC saturates and cannot achieve the full multiplexing gain
\cite{J2006}. This is because the remaining interference caused by
the quantization error increases with SNR so that the SINR is
saturated in the high SNR region.

For ease of explanation, we decompose the achievable rate at user
$k$ into an \emph{increasing term} and a \emph{decreasing term}
denoted by $\mathcal{R}_k^+ (P,b_k)$ and $\mathcal{R}_k^- (P,b_k)$,
respectively, given by
\begin{align}
 \mathcal{R}_k^+(P,b_k)&=
    \mathbb{E}\left[ \log_2 \left( 1+\frac{P}{M} Q_1 X_1
    + \frac{P}{M} Q_1 W_1 Z_k \right)\right]\NNL
 \mathcal{R}_k^-(P,b_k)&=
    \mathbb{E}
    \left[\log_2 \left( 1+ \frac{P}{M} Q_1 W_1  Z_k\right)
    \right],\NN
\end{align}
so that $\mathcal{R}_k(P,b_k) = \mathcal{R}_k^+(P,b_k) -
\mathcal{R}_k^-(P,b_k)$.
Similarly, we can express the average sum rate into two parts as
$\mathcal{R}(P,\mathbf{b}) = \mathcal{R}^+(P,\mathbf{b}) -
\mathcal{R}^-(P,\mathbf{b})$
%
%
where $\mathcal{R}^+(P,\mathbf{b}) =
\sum_{k=1}^K\mathcal{R}_k^+(P,b_k)$ and
$\mathcal{R}^-(P,\mathbf{b})=\sum_{k=1}^K \mathcal{R}_k^-(P,b_k)$.

In the high SNR region, the increasing term of the $k$th user's
achievable rate, $\mathcal{R}_k^+(P,b_k)$, becomes
\begin{align}
 &\lim_{P \to \infty} \mathcal{R}_k^+(P,b_k)
    =\mathbb{E}\left[ \log_2 \left(\tfrac{P}{M} Q_1\right)\right]
    + \mathbb{E}\left[ \log_2 \left(X_1 + W_1 Z_k \right)\right],
    \NN
\end{align}
where the second term on the right hand side of the equality is only
affected by the quantization error, $Z_k$.
For the quantization error $Z_k \in [0,1]$, the range of $\log_2
\left(X_1 + W_1 Z_k \right)$ becomes
$\log_2 \left(X_1 + W_1 Z_k \right) \in [\log_2 \left(X_1\right),
\log_2 \left(X_1 + W_1 \right)]$.
In the high SNR region, on the other hand, the decreasing term of
the $k$th user's achievable rate, $\mathcal{R}_k^-(P,b_k)$, becomes
\begin{align}
 &\lim_{P \to \infty} \mathcal{R}_k^-(P,b_k)
    =\mathbb{E}\left[ \log_2 \left(\tfrac{P}{M} Q_1W_1\right)\right]
    + \mathbb{E}\left[ \log_2 \left(Z_k \right)\right],\NN
\end{align}
where the quantization error affects $\mathbb{E}\left[ \log_2
\left(Z_k \right)\right]$ only.
For the quantization error $Z_k\in[0,1]$, we can find $\log_2
\left(Z_k \right) \in (-\infty, 0]$. However, note that $\log_2
\left(\frac{P}{M}Q_1W_1 \right)\gg-\log_2 Z_k$ when $P \to \infty $
although $\log_2 \left(Z_k \right) \in (-\infty, 0]$.
These facts implicate that in the high SNR region the quantization
error, $Z_k$, only dependent on the feedback size, highly affects
the rate decreasing term $\mathcal{R}_k^-(P,b_k)$ and thus the
achievable rate at each user is dominated by the rate decreasing
term.
Therefore, the feedback rate sharing strategy in the high SNR region
should be focused on minimizing the rate decreasing term.
The average sum rate decreasing term, $\mathcal{R}^-
(P,\mathbf{b})$, becomes
\begin{align}
 \lim_{P\to\infty}\mathcal{R}^-(P,\mathbf{b})
 &=M \mathbb{E}\left[\log_2\left(\frac{P}{M} Q_1 W_1\right)\right]
        + \sum_{k=1}^K \mathbb{E}\left[\log_2 Z_k \right].\NN
\end{align}
Hence, as an alternative of \eqref{eqn:optimization_problem} in the
high SNR region, we solve the optimization problem to minimize
$\mathcal{R}^-(P,\mathbf{b})$ equivalent with the following problem:
\begin{align}
 \underset{\mathbf{b}=[b_1,\ldots,b_K]}{\textrm{minimize}} & \qquad
    \sum_{k=1}^K \mathbb{E}[\log_2 Z_k]
    \label{eqn:optimization_problem_H}\\
    \textrm{subject to}
        &\qquad \eqref{eqn:constraint1},\eqref{eqn:constraint2}. \NN
\end{align}

\begin{theorem}[Strategy in the High SNR Region]\label{theorem:strategy_high}
Using RVQ in the high SNR region, feedback rate sharing strategy
$\mathbf{b}_1$ achieves higher average sum rate than feedback rate
sharing strategy $\mathbf{b}_2$ whenever $\mathbf{b}_1 \succeq
\mathbf{b}_2$, i.e.,
\begin{align}
 \lim_{P\to \infty} \mathcal{R}(P,\mathbf{b}_1)
    \ge \lim_{P\to \infty} \mathcal{R}(P,\mathbf{b}_2) \quad\textrm{for all}\quad
    \mathbf{b}_1 \succeq \mathbf{b}_2.
\end{align}
\end{theorem}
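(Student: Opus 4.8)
The plan is to follow the blueprint already set up before \eqref{eqn:optimization_problem_H}: in the high SNR region the feedback-dependent part of the sum rate is carried by the decreasing term, so it suffices to show that $\mathbf{b}_1\succeq\mathbf{b}_2$ forces $\sum_k\mathbb{E}[\log_2 Z_k]$ to be no larger under $\mathbf{b}_1$. Write $b_{j,k}$ for the $k$th component of $\mathbf{b}_j$ and $Z_{j,k}$ for the corresponding quantization error; the target is $\sum_{k=1}^{K}\mathbb{E}[\log_2 Z_{1,k}]\le\sum_{k=1}^{K}\mathbb{E}[\log_2 Z_{2,k}]$. Since $\lim_{P\to\infty}\mathcal{R}^{-}(P,\mathbf{b})=M\,\mathbb{E}[\log_2(\tfrac{P}{M}Q_1W_1)]+\sum_k\mathbb{E}[\log_2 Z_k]$ and $\mathcal{R}=\mathcal{R}^{+}-\mathcal{R}^{-}$ with the feedback-dependence of $\mathcal{R}^{+}$ bounded (as discussed in the text and made explicit below), this inequality translates into the asserted ordering of $\lim_{P\to\infty}\mathcal{R}(P,\mathbf{b})$; the whole argument mirrors the low-SNR proof of Theorem \ref{theorem:strategy_low} with the direction of majorization reversed.

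First I would put $\mathbb{E}[\log_2 Z_k]$ in closed form. By \eqref{eqn:QE_CDF}, the variable $Y_k:=Z_k^{M-1}$ satisfies $\PR{Y_k>y}=(1-y)^{2^{b_k}}$ on $[0,1]$, i.e. $Y_k$ is $\mathrm{Beta}(1,2^{b_k})$-distributed. Using $-\ln Y_k=\int_0^1 y^{-1}\mathbf{1}\{y>Y_k\}\,dy$ and taking expectations,
\[
\mathbb{E}[\log_2 Z_k]=\frac{1}{(M-1)\ln 2}\,\mathbb{E}[\ln Y_k]=-\frac{1}{(M-1)\ln 2}\int_0^1\frac{1-(1-y)^{2^{b_k}}}{y}\,dy=-\frac{H_{2^{b_k}}}{(M-1)\ln 2},
\]
where $H_n:=\sum_{j=1}^{n}1/j$ and the last step is the classical harmonic-number identity. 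Hence minimizing $\sum_k\mathbb{E}[\log_2 Z_k]$ is equivalent to maximizing $\sum_{k=1}^{K}\phi(b_k)$, where $\phi(b):=H_{2^{b}}$ --- a function separable in the components of $\mathbf{b}$.

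Next I would show that $\phi$ is convex on $\{0,1,2,\dots\}$, i.e. that its forward difference $\phi(b+1)-\phi(b)=\sum_{j=2^{b}+1}^{2^{b+1}}1/j$ is non-decreasing in $b$. The index block $\{2^{b+1}+1,\dots,2^{b+2}\}$ splits into the disjoint pairs $\{2i-1,2i\}$ for $i=2^{b}+1,\dots,2^{b+1}$, so
\[
\phi(b+2)-\phi(b+1)=\sum_{i=2^{b}+1}^{2^{b+1}}\Big(\tfrac{1}{2i-1}+\tfrac{1}{2i}\Big)>\sum_{i=2^{b}+1}^{2^{b+1}}\tfrac{1}{i}=\phi(b+1)-\phi(b),
\]
which gives (strict) convexity. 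With $\phi$ convex, $\mathbf{b}\mapsto\sum_k\phi(b_k)$ is handled by the Hardy--Littlewood--P\'olya/Karamata inequality: since $\mathbf{b}_1\succeq\mathbf{b}_2$, one passes from $\mathbf{b}_1$ to $\mathbf{b}_2$ by a finite chain of unit transfers, each replacing a pair $(b_i,b_j)$ with $b_i>b_j$ by $(b_i-1,b_j+1)$, and each transfer changes $\sum_k\phi(b_k)$ by $[\phi(b_j+1)-\phi(b_j)]-[\phi(b_i)-\phi(b_i-1)]\le 0$ because $b_i-1\ge b_j$ and $\phi$ has non-decreasing increments. Therefore $\sum_k\phi(b_{1,k})\ge\sum_k\phi(b_{2,k})$, i.e. $\sum_k\mathbb{E}[\log_2 Z_{1,k}]\le\sum_k\mathbb{E}[\log_2 Z_{2,k}]$, which is exactly what is needed.

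I expect the crux to be this convexity step --- showing $b\mapsto H_{2^{b}}$ is convex --- since everything after it is routine majorization bookkeeping; the harmonic-number evaluation in the previous step is the device that reduces the crux to the elementary bound $\tfrac{1}{2i-1}+\tfrac{1}{2i}>\tfrac{1}{i}$. I would also spell out, as anticipated in the discussion above, that the only remaining $b_k$-dependent contribution to the high-SNR per-user rate, namely $\mathbb{E}[\log_2(X_1+W_1Z_k)]$ appearing in $\mathcal{R}_k^{+}$, stays inside the fixed interval $[\mathbb{E}\log_2 X_1,\ \mathbb{E}\log_2(X_1+W_1)]$ and is thus dominated by $-\mathbb{E}[\log_2 Z_k]$, so it cannot reverse the ordering established above.
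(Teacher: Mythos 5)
Your proposal is correct and follows essentially the same route as the paper's Appendix~C: both reduce the problem to minimizing $\sum_k\mathbb{E}[\log_2 Z_k]$ via the high-SNR decomposition $\mathcal{R}=\mathcal{R}^{+}-\mathcal{R}^{-}$, establish that $\mathbb{E}[\log_2 Z_k]=-\tfrac{\log_2 e}{M-1}H_{2^{b_k}}$ is discretely concave in $b_k$ (your convexity of $b\mapsto H_{2^{b}}$ is the same statement), and conclude by the majorization/Karamata inequality. The only difference is that you derive the harmonic-number formula from \eqref{eqn:QE_CDF} and prove the monotonicity of the forward difference via the pairing bound $\tfrac{1}{2i-1}+\tfrac{1}{2i}>\tfrac{1}{i}$, steps the paper handles by citation and assertion, respectively.
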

%
\begin{proof}
See Appendix C. 
\end{proof}

\begin{corollary}\label{corollary:strategy_high}
In the high SNR region, when the total amount of feedback
information from all users is fixed (i.e., $\sum b_k = K\bar{b}$),
the optimal feedback rate sharing strategy is to allocate whole
feedback amount $K\bar{b}$ to a single user while the worst strategy
is to allocate the same amount of feedback ($b_k=\bar{b}$) to all
users.
\end{corollary}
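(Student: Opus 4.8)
The plan is to parallel the proof of Corollary~\ref{corollary:strategy_low}, simply replacing the low-SNR ordering by the high-SNR ordering established in Theorem~\ref{theorem:strategy_high}. First I would recall the elementary fact from majorization theory that, among all vectors $\mathbf{b}$ with nonnegative integer entries and fixed one norm $\Vert\mathbf{b}\Vert_1 = K\bar{b}$, the uniform vector $[\bar{b},\ldots,\bar{b}]$ is majorized by every such $\mathbf{b}$, while every such $\mathbf{b}$ is in turn majorized by the ``all-to-one'' vector $[K\bar{b},0,\ldots,0]$:
\begin{align}
 [\bar{b},\ldots,\bar{b}] \preceq \mathbf{b} \preceq [K\bar{b},0,\ldots,0]. \NN
\end{align}
This holds for every feasible $\mathbf{b}$ and is immediate from the definition of majorization: the partial sums $\sum_{i=1}^n [\mathbf{b}^\downarrow]_i$ are smallest when all entries are equal and largest when the whole mass is concentrated in a single coordinate. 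One should also note that the two extreme vectors are themselves feasible, since $\bar{b}\in\mathbb{Z}^+$, so the bounds are attained.

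Next I would invoke Theorem~\ref{theorem:strategy_high}, which states that in the high SNR region $\mathbf{b}_1 \succeq \mathbf{b}_2$ implies $\lim_{P\to\infty}\mathcal{R}(P,\mathbf{b}_1) \ge \lim_{P\to\infty}\mathcal{R}(P,\mathbf{b}_2)$. Applying this with $\mathbf{b}_1 = [K\bar{b},0,\ldots,0]$ and $\mathbf{b}_2 = \mathbf{b}$ shows that no feasible strategy outperforms allocating the whole feedback budget to a single user; applying it with $\mathbf{b}_1 = \mathbf{b}$ and $\mathbf{b}_2 = [\bar{b},\ldots,\bar{b}]$ shows that no feasible strategy is worse than the equal allocation $b_k=\bar{b}$. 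Combining the two statements yields the corollary.

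I expect essentially no obstacle: the corollary is a direct specialization of Theorem~\ref{theorem:strategy_high} once the feasible set is bracketed by the two extreme vectors in the majorization order. All the real work --- reducing the high-SNR objective to minimizing $\sum_{k}\mathbb{E}[\log_2 Z_k]$ and establishing the relevant Schur-type monotonicity of this objective in $\mathbf{b}$ via the complementary CDF \eqref{eqn:QE_CDF} --- has already been done in the statement and proof of Theorem~\ref{theorem:strategy_high}, so the only thing left is this short majorization argument.
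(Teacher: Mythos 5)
Your proposal is correct and follows essentially the same route as the paper: both bracket every feasible strategy between $[\bar{b},\ldots,\bar{b}]$ and $[K\bar{b},0,\ldots,0]$ in the majorization order and then apply Theorem~\ref{theorem:strategy_high} to the two extremes. The only difference is that you spell out why the bracketing holds, which the paper simply asserts by reference to the proof of Corollary~\ref{corollary:strategy_low}.
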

\begin{proof}
As stated in the proof of Corollary 1, any feedback rate sharing
strategy, $\mathbf{b}$, satisfies that
\begin{align}
 [\bar{b}, \ldots, \bar{b}]
    \preceq \mathbf{b}\preceq [K\bar{b}, 0,\ldots, 0].
\end{align}
Thus, the optimal feedback rate strategy in the high SNR region is
to allocate the whole feedback size to a single user while the worst
strategy is to allocate the same feedback size to each user.
\end{proof}

\subsection{Intermediate SNR Region}\label{section:intermediate}

In Theorem \ref{theorem:strategy_low} and Theorem
\ref{theorem:strategy_high}, the optimal feedback rate sharing
strategies in the asymptotic SNR regions are derived. In the
practical SNR region, the optimal strategy can easily be found by a
numerical method owing to Lemma \ref{lemma:capacity_k} that the
achievable rate of each user only depends on its own feedback size.
We first compute the achievable rates of each user for various
feedback bits $b_k=0, \ldots, K\bar{b}$, respectively. Using the
computed numerical values, we select the best feedback rate sharing
strategy for each SNR that maximizes the total sum rate among all
possible strategies.
For example, when total feedback size is 16bits, the conventional
exhaustive search needs to search the optimal strategy among all
possible 64 strategies. On the other hand, in our proposed numerical
method, it is enough to consider only five strategies -- $[ 0, 0, 0,
0]$, $[ 1, 2, 3, 4]$, $[ 5, 6, 7, 8]$, $[ 9, 10, 11, 12]$, $[ 13,
14, 15, 16]$ -- because the achievable rate for other strategies can
be easily obtained from Lemma \ref{lemma:capacity_k}.
Denoting the set of all possible strategies by $\mathcal{B}$, the
procedure to find the optimal feedback strategy is described in
Algorithm \ref{alg:FRS_procedure}. The complexity of the procedure
will be analyzed in Section \ref{section:complexities}.

\begin{algorithm}[!b]
\caption{Procedure to find Feedback Rate Sharing Strategy}
    \begin{algorithmic}[1]
    \State{\textbf{Initialization:} randomly choose $\mathbf{b}\in\mathcal{B}$}
    \ForAll{$\mathbf{b}'\in \mathcal{B}$}
    \If {$\sum\mathcal{R}_k(\gamma_k P,[\mathbf{b}']_k)
            > \sum\mathcal{R}_k(\gamma_k P,[\mathbf{b}]_k)$}
    \State{$\mathbf{b}=\mathbf{b}'$;}
    \EndIf
    \EndFor
    \State{\textbf{Output:} the optimal feedback strategy $\mathbf{b}$}
    \end{algorithmic}
    \label{alg:FRS_procedure}
\end{algorithm}

\begin{observation}\label{observation:mid_SNR}
The optimal feedback rate sharing strategy is to allocate the same
amount of feedback to the optimal number of users at given SNR.
\end{observation}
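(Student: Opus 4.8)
The plan is to collapse the $K$-dimensional integer program into a one-variable shape question about the per-user rate. By Lemma~\ref{lemma:capacity_k} the objective decouples, $\mathcal{R}(P,\mathbf{b}) = \sum_{k=1}^{K} g_P(b_k)$ with $g_P(b) \triangleq \mathcal{R}_1(P,b)$ from \eqref{eqn:capacity_k} a common nondecreasing function of the integer feedback size $b$ (it climbs from $g_P(0)$ toward the perfect-CSIT rate $\mathbb{E}[\log_2(1+\tfrac{P}{M}Q_1X_1)]$ as $b\to\infty$, since $Z_b\downarrow 0$). Thus \eqref{eqn:optimization_problem}--\eqref{eqn:constraint2} is exactly $\max\sum_k g_P(b_k)$ over $\sum_k b_k = K\bar b$, $b_k\in\{0\}\cup\mathbb{Z}^+$. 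I claim the observation reduces to a single structural fact: $g_P$ is \emph{sigmoidal}, i.e. its increments $\delta(b)\triangleq g_P(b+1)-g_P(b)$ are first nondecreasing and then nonincreasing, equivalently the second difference $g_P(b+1)-2g_P(b)+g_P(b-1)$ changes sign exactly once --- from $\ge 0$ to $\le 0$ --- at some SNR-dependent breakpoint $b^\dagger(P)$.

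To establish the sigmoidal shape I would work from the law of $Z_k$ in \eqref{eqn:QE_CDF}: adding one feedback bit replaces $Z_b$ by $\min(Z_b,Z_b')$ with an independent copy, so $\delta(b)$ is the expected rate gain of one ``min''-refinement of the quantization error, and iterating the refinement inside the representation \eqref{eqn:capacity_k} lets one track the sign of $\delta(b+1)-\delta(b)$. The two extreme SNR regimes, already treated in the paper, fix the endpoints of the interpolation: the low-SNR expansion leading to \eqref{eqn:optimization_problem_L} gives $g_P(b) = \mathrm{const} - c_P\,\mathbb{E}[Z_b] + o(P)$ with $c_P>0$ and $\mathbb{E}[Z_b]$ convex decreasing in $b$ (the convexity underlying Theorem~\ref{theorem:strategy_low}), so $g_P$ is concave and $b^\dagger=0$; the high-SNR analysis leading to \eqref{eqn:optimization_problem_H} makes $-\mathbb{E}[\log_2 Z_b]$, convex in $b$ (the concavity of $\mathbb{E}[\log_2 Z_b]$ underlying Theorem~\ref{theorem:strategy_high}), the dominant $b$-dependent term, so the second difference is nonnegative throughout $[0,K\bar b]$ and $b^\dagger\ge K\bar b$. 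The main obstacle is to show that for \emph{every} finite $P$ the second difference switches sign only once between these two limits: $g_P$ has no closed form, and the expectations over $Q_1,X_1,W_1$ entangle $P$ with $b$, so a direct monotonicity argument is delicate --- this is precisely why the result is recorded as an observation and confirmed numerically in Section~V.

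Granting sigmoidality, the ``equal split over the best number of users'' structure follows from the classical bang-per-buck argument for sigmoidal allocation. Relax $b_k$ to $\mathbb{R}_{\ge 0}$ and let $b^*(P)\ge b^\dagger$ be the unique maximizer of $(g_P(b)-g_P(0))/b$ (unique because $g_P$ is convex below $b^\dagger$, concave above, and $g_P(0)\ge 0$); then $g_P(b)-g_P(0)\le (b/b^*)\big(g_P(b^*)-g_P(0)\big)$ on $[0,b^*]$, so breaking a chunk of size $b^*$ into sub-$b^*$ pieces can only lose value. Hence an optimal $\mathbf{b}^\star$ gives no active user fewer than $b^*$ bits, so at most $m:=\min\{K,\lfloor K\bar b/b^*\rfloor\}$ users are active; and since every active user then lies on the concave branch, a one-bit transfer from a larger to a smaller active allocation never decreases $\sum_k g_P(b_k)$, so the active allocations may be taken equal up to the unavoidable $\pm1$ rounding when the number of active users does not divide $K\bar b$. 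Therefore the optimum splits $K\bar b$ as evenly as possible among some $m\le K$ users, with $m$ chosen to maximize $m\,g_P(K\bar b/m)$ --- the ``optimal number of users.'' As a consistency check, $P\to 0$ sends $b^\dagger,b^*\to 0$ and forces $m=K$, recovering Corollary~\ref{corollary:strategy_low}, while $P\to\infty$ sends $b^*$ up to the full budget $K\bar b$ and forces $m=1$, recovering Corollary~\ref{corollary:strategy_high}; the observation is thus the natural interpolation between the two asymptotic corollaries.
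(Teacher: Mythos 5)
The paper does not prove this statement at all: it is deliberately labelled an \emph{Observation}, and the only support offered is the numerical procedure of Section \ref{section:intermediate} (Algorithm \ref{alg:FRS_procedure}), which exploits Lemma \ref{lemma:capacity_k} to precompute the per-user rate $\mathcal{R}_1(P,b)$ for $b=0,\ldots,K\bar b$ and then exhaustively compares the candidate strategies; Table \ref{tab:strategy} and Fig.~\ref{fig:sum_rate_4user} are the entire justification. Your proposal therefore takes a genuinely different route: you try to turn the empirical pattern into a theorem by isolating the one structural property of $g_P(b)=\mathcal{R}_1(P,b)$ --- a sigmoidal shape, i.e.\ a single sign change of the second difference at an SNR-dependent breakpoint --- from which the ``equal split over the best number of active users'' form follows by the standard exchange/bang-per-buck argument for sigmoidal allocation. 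That reduction is consistent with the paper's two asymptotic lemmas (convexity of $\mathbb{E}[Z_k]$ and concavity of $\mathbb{E}[\log_2 Z_k]$ in $b_k$ give exactly the two endpoint shapes you cite), and, granting sigmoidality, your combinatorial argument is sound up to the $\pm 1$ rounding you acknowledge; it also correctly exhibits Corollaries \ref{corollary:strategy_low} and \ref{corollary:strategy_high} as the $m=K$ and $m=1$ endpoints. This is more than the paper offers.

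The gap is exactly where you flag it, and it is not cosmetic: the entire conclusion rests on the claim that for every finite $P$ the increments of $g_P$ are unimodal in $b$, and neither your argument nor the paper establishes this. Knowing that $g_P$ is discretely concave in the $P\to 0$ limit and effectively convex in the $P\to\infty$ limit says nothing about the number of inflection points at intermediate $P$; the expectation over $(Q_1,X_1,W_1)$ couples $P$ and $Z_k$ nonlinearly inside the logarithm, and a priori the second difference could change sign more than once, in which case the optimum need not be an equal split over a subset of users (unequal near-optimal splits could win, and indeed for asymmetric path losses Table \ref{tab:strategy_pathlosses} shows unequal optima). The ``min with an independent copy'' refinement you sketch for controlling $\delta(b+1)-\delta(b)$ is a plausible attack but is not carried out. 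As submitted this is a conditional proof of an unconditional claim; to close it you would have to either prove the single-crossing property of the second difference of $\mathcal{R}_1(P,\cdot)$ or, as the paper does, retreat to verifying the conclusion numerically over the finite candidate set $\mathcal{B}$ of \eqref{eqn:FB_set}.
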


\begin{table}[t]\centering
\caption{The optimal feedback rate sharing strategy for a $4\times
4$ MIMO BC}
\begin{tabular}{c|c||c|c||c|c}
 \hline \hline
   \multicolumn{2}{c||}{2 streams} &
   \multicolumn{2}{c||}{3 streams} &
   \multicolumn{2}{c}{4 streams}  \\
 \hline
   SNR{\scriptsize(dB)}& $\mathbf{b}^\downarrow$ & SNR&
   $\mathbf{b}^\downarrow$ &  SNR& $\mathbf{b}^\downarrow$\\
 \hline \hline
     0$\sim$27 & [12,12] & 0$\sim$12 & [8,8,8] & 0$\sim$7 &[6,6,6,6]\\
    28$\sim$~  & [24,~0] & 13$\sim$23 & [12,12,0] & 8$\sim$11 &[8,8,8,0]\\
        ~      &    ~    & 24$\sim$~  & [24,0,0] & 12$\sim$20 &[12,12,0,0]\\
        ~      &    ~    &      ~     &  ~          &   21$\sim$ &[24,0,0,0]\\
 \hline \hline
\end{tabular}
\label{tab:strategy}
\end{table}

\begin{example} \label{example:mid_snr}
For a $4\times 4$ MIMO BC with 24 total allowable feedback bits
($K\bar{b}=24$), the achievable rate of a user for various
$b_k\in\{0,\ldots, 24\}$ is plotted in Fig. \ref{fig:capacity_Bi}.
For various feedback rate sharing strategies, the sum rate is
calculated by using the numerical values obtained in Fig.
\ref{fig:capacity_Bi} and then we can find the optimal feedback
sharing strategy for given SNR as shown in Table \ref{tab:strategy}.
\end{example}
Interestingly, the optimal feedback rate sharing strategy determines
the optimal number of concurrent users for equal feedback rate
sharing at a given SNR.
In a practical system with user scheduling, the weighted sum rate
may be more important than the sum rate. We can also easily find the
optimal feedback rate sharing strategy numerically as in Example 1
owing to Lemma \ref{lemma:capacity_k}.

\subsection{Different Path Losses at the Users} \label{sec:different_paths}
In this subsection, we obtain the feedback rate sharing strategy
according to SNR (i.e., $P$) when propagation path losses for users
are different.
Under the different path losses, the sum rate given in
\eqref{eqn:sum_rate} becomes
\begin{align}
 \mathcal{R}(P, \mathbf{b})
 &= \sum_{k=1}^{K}\mathbb{E}\left[
    \log_2\left(
    1+\frac{\frac{\gamma_k P}{M} |\mathbf{h}_k^\dagger \mathbf{v}_k|^2}
    {1+\sum_{i\ne k}\frac{\gamma_k P}{M} |\mathbf{h}_k^\dagger \mathbf{v}_i|^2}
    \right) \right]\NNL
 &\stackrel{(a)}{=} \sum_{k=1}^{K}\mathbb{E}
    \left[ \log_2\left(
    1+\frac{\frac{\gamma_k P}{M} Q_kX_k}{1+ \frac{\gamma_k P}{M} Q_k W_k Z_k}
    \right) \right]\NNL
 &\stackrel{(b)}{=} \sum_{k=1}^{K}\mathbb{E}
    \left[ \log_2\left(
    1+\frac{\frac{\gamma_k P}{M} Q_1X_1}{1+ \frac{\gamma_k P}{M} Q_1 W_1 Z_k}
    \right) \right]\NN
\end{align}
where $(a)$ is from the definitions of $Z_k$, $Q_k$, $X_k$, and
$W_k$ given in \eqref{eqn:Z_k} and \eqref{eqn:three_RV},
respectively, and $(b)$ holds from Lemma \ref{lemma:RV_properties}.
Thus, we can easily check that Lemma \ref{lemma:capacity_k} is still
valid for different path losses such that $\mathcal{R}(P,
\mathbf{b}) = \sum_{k=1}^K \mathcal{R}_k(\gamma_kP, b_k)$
where $\mathcal{R}_k(\gamma_kP, b_k)$ is the achievable rate at the
$k$th user given by
\begin{align} \label{eqn:pathloss}
 \mathcal{R}_k(\gamma_kP, b_k)
    \triangleq
    \mathbb{E}\left[\log_2\left(
    1+\frac{\frac{\gamma_kP}{M} Q_1X_1}{1+ \frac{\gamma_kP}{M} Q_1 W_1 Z_k}
    \right) \right].
\end{align}
The equation \eqref{eqn:pathloss} indicates that the average
achievable rate at each user is affected by only its own path loss
and independent of other users' path losses. Therefore, the optimal
feedback rate sharing strategy can be found by the simple numerical
method proposed in Section \ref{section:intermediate}.
In the same manner in Example \ref{example:mid_snr}, we first
compute the achievable rates of each user for various feedback bits
based on (\ref{eqn:pathloss}). Then, we select the optimal feedback
rate sharing strategy  $\mathbf{b}=[b_1, \ldots, b_K]$ from the
computed values to maximize the sum rate $\sum_{k=1}^K \mathcal{R}_k
(\gamma_k P, b_k)$.
The equation \eqref{eqn:pathloss} also implicates that the effects
of path losses are canceled out in the high SNR region since
$\lim_{P\to\infty} \mathcal{R}_k (\gamma_k P, b_k) =
\mathbb{E}\left[\log_2\left( 1+\frac{X_1}{W_1 Z_k} \right) \right]$.
Therefore, the optimal feedback rate sharing strategy is the same as
Theorem \ref{theorem:strategy_high} even when different path losses
are taken into account.

On the other hand, the feedback rate sharing strategy for different
path losses proposed in \cite{XZD2009} is given by
\begin{align}
 b_k=\bar{b}-(K-1)\left(
    \log_2 \gamma_k - \frac{1}{K}\sum_{i=1}^K\log_2
    \gamma_i \right) \label{eqn:FRS_Xu}
\end{align}
which results in equal sharing of the sum feedback size regardless
of SNR levels when the path losses are the same (i.e., $\gamma_1 =
\ldots = \gamma_K$), which is not optimal in the mid and the high
SNR regions.

\begin{table}[b]\centering
    \caption{The optimal feedback rate sharing strategy for a
    $4\times 4$ MIMO BC when all users suffering difference path
    losses $(\gamma_1,\gamma_2, \gamma_3,\gamma_4)=(1.5, 1.25, 1,
    0.75)$}
    \begin{tabular}{c|c||c|c}
    \hline
    SNR  & $[b_1,b_2,b_3,b_4]$  & SNR  & $[b_1,b_2,b_3,b_4]$\\\hline
    $~0\sim~1$ dB & $[8,8,8,0]$ & $8\sim17$ dB& $[13,11,0,0]$ \\\hline
    $~2\sim~6$ dB& $[10,8,6,0]$  & $18$ dB & $[16,8,0,0]$ \\\hline
    $7$ dB& $[11,8,5,0]$        & $19$ dB $\sim$& $[24,0,0,0]$ \\\hline
    \end{tabular}\label{tab:strategy_pathlosses}
\end{table}

\begin{example} \label{example:pathloss}
Consider a $4\times 4$ MIMO BC with 24 total allowable feedback bits
($K\bar{b}=24$). We assume the path losses of each user as
$(\gamma_1,\gamma_2, \gamma_3,\gamma_4)=(1.5, 1.25, 1, 0.75)$. For
the given path losses, the feedback rate sharing strategy given in
\eqref{eqn:FRS_Xu} becomes $\mathbf{b}=[7,7,6,4]$. On the other
hand, the optimal feedback rate strategy obtained by the proposed
numerical method is given in Table \ref{tab:strategy_pathlosses}
according to various SNR regions. The average sum rate by the
optimal feedback rate strategy by the proposed method is plotted in
Fig \ref{fig:capacity_pathlosses}.
Fig \ref{fig:capacity_pathlosses} confirms that our proposed
strategy given in Table \ref{tab:strategy_pathlosses} more
significantly outperforms the feedback rate sharing strategy
proposed in \eqref{eqn:FRS_Xu} as SNR becomes higher.
\end{example}

\subsection{Complexity Analysis}\label{section:complexities}

In this subsection, we analyze complexity to find the optimal
feedback rate strategy described in Algorithm
\ref{alg:FRS_procedure}. Because the effects of different path
losses can be simply regarded as different transmit SNR of users as
described in Section IV-D, the achievable rates of users with
different path losses can be calculated by the same procedure based
on Fig. \ref{fig:capacity_Bi}.

In the symmetric path loss cases (i.e., $\gamma_1 = \ldots =
\gamma_K$), two strategies $\mathbf{b}_1$ and $\mathbf{b}_2$ yield
the same performance whenever $\mathbf{b}_1^\downarrow =
\mathbf{b}_2 ^\downarrow$. Thus, the optimal feedback strategy can
be found in the strategy set $\mathcal{B}$ given by
\begin{align}
 \mathcal{B}
    =\Big\{ \mathbf{b}^\downarrow ~\Big\vert~ \mathbf{b} \in (\mathbb{Z}^+\cup
    \{0\})^K,~ \sum_{k=1}^K [\mathbf{b}]_k = K \bar{b} \Big\}.
 \label{eqn:FB_set}
\end{align}
The number of all possible strategies is determined by the total
feedback size as in Table \ref{tab:strategy_number}.

\begin{table}[!b]\centering
\caption{The number of feedback strategies for $4\times 4$ MIMO BC}
\begin{tabular}{c||cccccccc}
 \hline
    Total FB Size   & 8 & 16 & 24 & 32 & 40 & 48 & 56 & 64\\
 \hline
    $\vert\mathcal{B} \vert$   &15 & 64 & 169 & 351 & 632 & 1033 & 1575 & 2280\\
 \hline
\end{tabular}
\label{tab:strategy_number}
\end{table}

For asymmetric path loss cases, without loss of generality we
consider the case that $\gamma_1\ge\ldots\ge\gamma_K$. Because the
larger feedback size yields the higher multiplexing gain, larger
feedback size should be assigned to the user with smaller path loss
(i.e., larger $\gamma_k$). This implicates that the strategy
$\mathbf{b}^\downarrow$ outperforms $\mathbf{b}$, i.e.,
\begin{align}
 \sum_{k=1}^K \mathcal{R}_k(\gamma_kP, [\mathbf{b}^\downarrow]_k)
 \ge \sum_{k=1}^K \mathcal{R}_k(\gamma_kP, [\mathbf{b}]_k).\NN
\end{align}
Therefore, the optimal feedback rate sharing strategy is selected in
the feedback strategy set $\mathcal{B}$ defined in
\eqref{eqn:FB_set}. Because the number of all possible strategies,
i.e., $\vert \mathcal{B} \vert$, is the same for the symmetric and
the asymmetric path loss cases, the computational complexity is also
the same for both cases.

\subsection{Extension to Stream Control}\label{section:stream_control}

Although the equal power allocation with full multiplexing is mainly
considered in our manuscript, our feedback rate sharing strategy can
readily be extended to the stream control where the transmitter
adaptively controls multiplexing gain. For $4\times 4$ MIMO BC, for
example, four ways of equal power allocation according to the number
of streams -- $[P/4, P/4, P/4, P/4]$, $[P/3, P/3, P/3, 0]$, $[P/2,
P/2, 0, 0]$, and $[P, 0, 0, 0]$ -- are possible with the steam
control. Note that single stream transmission corresponds to the
TDMA scheme. Since we consider ZF beamforming at the transmitter,
the beamforming vector for each user is randomly picked orthogonal
to other users' quantized channels. Therefore, it can easily be
shown that Theorem \ref{theorem:strategy_low} and Theorem
\ref{theorem:strategy_high} are still valid even with the stream
control. In Table \ref{tab:strategy}, we have found the optimal
feedback rate sharing strategy for $4\times 4$ MIMO BC according to
the number of streams and SNR when total feedback budget is 24bits
and the path losses are symmetric. We can also find the optimal
feedback rate sharing strategies for asymmetric path losses because
Lemma \ref{lemma:capacity_k} still holds for the stream control and
hence the rate of each served user is affected by its own feedback
size.

\section{Numerical Results}
\subsection{Numerical Examples}

In this section, we present numerical results to analyze the effects
of feedback rate sharing strategies. In Fig. \ref{fig:sum_rate}, the
average sum rates of a $2\times 2$ MIMO BC using different feedback
rate sharing strategies.
We consider five feedback rate sharing strategies $(\mathbf{b}_1,
\mathbf{b}_2, \mathbf{b}_3, \mathbf{b}_4, \mathbf{b}_5) = ([0,16],
[2,14], [4,12], [6,10], [8,8]$) such that
$\mathbf{b}_1 \succeq \mathbf{b}_2 \succeq \mathbf{b}_3 \succeq
\mathbf{b}_4 \succeq \mathbf{b}_5$.
In Fig. \ref{fig:sum_rate}, for all $\mathbf{b}_i \succeq
\mathbf{b}_j$ we obtain
$\lim_{P\to 0} \mathcal{R}(P,\mathbf{b}_i) < \lim_{P\to 0}
\mathcal{R}(P,\mathbf{b}_j)$ and $\lim_{P\to \infty}
\mathcal{R}(P,\mathbf{b}_i) > \lim_{P\to \infty}
\mathcal{R}(P,\mathbf{b}_j)$  as stated in Theorem
\ref{theorem:strategy_low} and Theorem \ref{theorem:strategy_high},
respectively.
In the low SNR region, the equal sharing of the sum feedback rate
$\mathbf{b}_5=[8,8]$ achieves the highest average sum rate while
allocating the whole feedback rate to a single user
$\mathbf{b}_1=[0,16]$ achieves the lowest average sum rate as
predicted in Corollary \ref{corollary:strategy_low}.
In the high SNR region, however, allocating the whole feedback rate
to a single user $\mathbf{b}_1=[0,16]$ achieves the highest
achievable rate whereas equal sharing of the feedback rate
$\mathbf{b}_5=[8,8]$ achieves the worst achievable rate as claimed
in Corollary \ref{corollary:strategy_high}.

In a noise limited environment, increasing multiplexing gains
directly results in higher sum rate, and the multiplexing gains are
maximized when the feedback rate is equally shared among users.
Since the remaining interference caused by the quantization error
becomes dominant in the high SNR region, the full multiplexing gain
cannot be achieved and the multiplexing gain rather diminishes as
SNR increases. Therefore, by allocating the whole feedback rate to a
single user, the other users can effectively eliminate the
interference limitation by removing all multiuser interference from
the user being allocated the whole feedback rate. Reducing the
number of interferers is more effective in an interference limited
environment from a sum rate perspective since the multiplexing gain
is already lost.

The sum rate of a $4\times 4$ MIMO BC for various feedback sizes is
shown in Fig.\ref{fig:sum_rate_4user} where the total feedback rate
is restricted to 36 bits.
Four feedback rate sharing strategies are considered --
$(\mathbf{b}_1,\mathbf{b}_2, \mathbf{b}_3, \mathbf{b}_4)$ =
($[0,0,0,36]$, $[0,0,18,18]$, $[0,12,12,12]$, $[9,9,9,9]$) such that
$\mathbf{b}_1 \succeq \mathbf{b}_2 \succeq \mathbf{b}_3 \succeq
\mathbf{b}_4$.
As stated in Theorem \ref{theorem:strategy_low} and Theorem
\ref{theorem:strategy_high}, we can observe that $\lim_{P\to 0}
\mathcal{R}(P,\mathbf{b}_i) < \lim_{P\to 0}
\mathcal{R}(P,\mathbf{b}_j)$ and $\lim_{P\to \infty}
\mathcal{R}(P,\mathbf{b}_i) > \lim_{P\to \infty}
\mathcal{R}(P,\mathbf{b}_j)$ whenever $\mathbf{b}_i \succeq
\mathbf{b}_j$.
Also, we can observe that the equal allocation to the optimal number
of users according to SNR becomes the optimal strategy in the
mid-SNR region as stated in Observation \ref{observation:mid_SNR}.

\subsection{Extension to Other Codebook Models}

Although the overall trends obtained by RVQ are known to agree well
with the results of other codebooks, we consider another codebook
model to verify the observations and conclusions obtained for RVQ
are effective for other codebook models.
Since a rate maximizing codebook is difficult to find, we consider a
spherical cap codebook \cite{MSEA2003, WSS2006,YJG2007}
which is based on an ideal assumption that each quantization cell in
$b$-bit codebook is a spherical cap with the surface area $2^{-b}$.
A spherical cap codebook is an ideal vector quantizer whose
quantization error is stochastically dominated by any other
codebooks \cite{J2006}.
In a $b$-bit spherical cap codebook, the CDF of the quantization
error denoted by $\tilde{Z}$ becomes 
\begin{align}
    \PR{\tilde{Z}<z}=\bigg\{
    \begin{array}{ll}
        2^{b}z^{M-1},&\quad 0\le z\le 2^{-\frac{b}{M-1}} \NNL
        1,&\quad z\ge 2^{-\frac{b}{M-1}}.
    \end{array}\NN
\end{align}
Fig. \ref{fig:sum_rate_4user} and Fig.
\ref{fig:sum_rate_optimal_codebook} show the average sum rates of a
$4\times 4$ MIMO BC using various feedback sharing strategies when
RVQ and a spherical cap codebook are used, respectively. This result
confirms the optimal strategies obtained from RVQ is still valid for
the spherical cap codebook.

In general, RVQ and spherical cap codebook are regarded as the lower
bound and the upper bound of the practical quantization codebook,
respectively. From the both codebook models, therefore, we can
conjecture the average sum rate in practical $4\times 4$ ZF MIMO BC
for the given configuration.
In Fig. \ref{fig:practical_codebook_case}, the conjectured average
sum rate region for practical quantization codebook (with $\sum
b_k=36$) is shaded with/without adopting our proposed feedback rate
sharing strategy, respectively.
Each shaded region is bounded both on RVQ and the spherical cap
cases plotted in Fig. \ref{fig:sum_rate_4user} and Fig.
\ref{fig:sum_rate_optimal_codebook}, respectively.
Fig. \ref{fig:practical_codebook_case} implicates that our proposed
feedback rate sharing strategy is useful even for practical ZF MIMO
BC systems, especially in the high SNR region.

\subsection{Comparison with TDMA and Regularized ZF}

We also consider the regularized zero-forcing beamforming
\cite{J2006} which enhances the performance of ZF beamforming in the
low SNR region. Also, TDMA is considered and compared with both ZF
beamforming and regularized ZF beamforming.
The average sum rates of a $4\times4$ MIMO BC using ZF beamforming
adopting our proposed feedback rate sharing strategy are compared
with TDMA in Fig. \ref{fig:sum_rate_TDMA}, when $\sum b_k =60$.
In TDMA, all available feedback bits are allocated to the single
served user ($\mathbf{b}=[60]$).
In Fig. \ref{fig:sum_rate_TDMA}, we can observe that ZF beamforming
is inferior to a TDMA system in both low and high SNR regions
although it outperforms a TDMA system in the mid SNR region. In
these regions, it is desirable to adopt the mode switching
\cite{ZHKA2009} between ZF and TDMA for sum rate maximization.

In the regularized ZF beamforming, the normalized column vectors of
$\hat{\mathbf{H}} ^\dagger \left(\hat{\mathbf{H}}
\hat{\mathbf{H}}^\dagger + \frac{M}{P} \mathbf{I}_M \right)^{-1}$
are used for the beamforming vectors where $\mathbf{I}_M$ is an
$M\times M$ identity matrix.
Although the optimal feedback rate sharing strategy using the
regularized ZF beamforming is hard to analyze, the feedback rate
sharing strategy will be the same with that of ZF beamforming case
in the high SNR region. This is because the regularized ZF
beamforming vectors correspond to ZF beamforming vectors in the high
SNR region.
In Fig. \ref{fig:RVQ_TDMA_MMSE_60bits}, the average sum rates of a
$4\times4$ MIMO BC using regularized ZF beamforming are plotted
while other parameters are same in Fig. \ref{fig:sum_rate_TDMA}.
As shown in Fig. \ref{fig:RVQ_TDMA_MMSE_60bits}, the regularized ZF
beamforming improves ZF beamforming especially in the low SNR region
and hence outperforms TDMA in wider SNR region.

Since TDMA always achieves a multiplexing gain of one even with
blind transmission, TDMA system outperforms MIMO BC with limited
feedback in the high SNR region.
This is because the achievable rate of MIMO BC with finite limited
feedback is saturated in the high SNR region due to mutual
interference.
The inferior performance in the high SNR region is a fundamental
limit of MIMO BC with limited feedback.
However, it should be noted that ZF beamforming can be enhanced by
the regularized ZF beamforming and our feedback rate sharing
strategy enables ZF beamforming or regularized ZF beamforming to
outperform TDMA in wider SNR region.
%
%
%
Note that our main contributions are to find the feedback rate
sharing strategy and to show the feedback rate sharing strategy
(e.g., $\sum b_k=60$) enhances the system performance compare to
equal feedback rate sharing (e.g. $\mathbf{b}=[15,15,15,15]$).
In Fig. \ref{fig:RVQ_TDMA_MMSE_60bits}, the regularized ZF
beamforming outperforms TDMA from -15dB to about 45dB when the
optimal feedback rate sharing strategy is employed, whereas equally
sharing makes the regularized ZF beamforming outperform TDMA until
about 34dB.

\section{Conclusion}
In this paper, we have analyzed the average sum rate of ZF MIMO BC
with limited feedback when the users share the feedback rates.
The impact of asymmetric feedback sizes among the users has been
rigorously analyzed by adopting RVQ at each user.
Our mathematical analysis has shown that the optimal feedback rate
sharing strategy in the high SNR region is to allocate the whole
feedback rate to a single user.
On the other hand, the optimal feedback rate sharing strategy in the
low SNR region is the equal sharing of the feedback rate among
users.
We have proposed a simple numerical method for finding the optimal
feedback rate sharing strategy in the practical SNR region and shown
that equal sharing of the feedback rate among the optimal number of
concurrent users is optimal. It has also been shown that the
proposed numerical method can be applicable to finding the optimal
feedback rate sharing strategy when path losses of the users are
different.
In the simulation part, we have shown our proposed feedback capacity
sharing strategy is still valid for other system configurations such
as regularized zeroforcing transmission and spherical-cap codebook.


\section*{Appendix A. Proof of Lemma \ref{lemma:RV_properties}}
\setcounter{equation}{0}
\renewcommand{\theequation}{A.\arabic{equation}}
Since the channel vectors are i.i.d, it is obvious that $Q_k \sim
Q_1$ for all $k$.
Because $\mathbf{h}_k$ is isotropic in $\mathbb{C}^{M}$, the
quantization of $\mathbf{h}_k$ is also isotropic in
$\mathbb{C}^{M}$.
Thus, $\{\hat{\mathbf{h}}_k\} _{k=1}^K$ become independent and
isotropically distributed random vectors in $\mathbb{C}^{M}$.
Because $\mathbf{v}_k$ is uniquely obtained from
$\{\hat{\mathbf{h}}_i\} _{i\ne k}$, the beamforming vectors,
$\{\mathbf{v}_k\} _{k\ne 1}^K$, are also isotropic in
$\mathbb{C}^M$.
Since $\mathbf{v}_k$ is independent of $\hat{\mathbf{h}}_k$, $X_k
(=\vert \tilde{\mathbf{h}}_k \mathbf{v}_k \vert^2)$ becomes the
squared inner product between two independent random vectors
isotropic in $\mathbb{C}^M$. Hence, $X_k$ is identical for all $k$,
i.e., $X_k \sim X_1$.
For $W_k (= \sum_{i\ne k}| \mathbf{e}_k ^\dagger \mathbf{v}_i|^2)$,
both $\mathbf{e}_k$ and $\{\mathbf{v}_i\}_{i\ne k}$ are picked
independently in the null space of $\hat{\mathbf{h}}_k$, and they
are also isotropic in the $M-1$ dimensional subspace. Thus, $W_k$
becomes the sum of $K-1$ the squared inner products between two
independent and isotropic random vectors in the $M-1$ dimensional
subspace in $\mathbb{C}^M$ so that $W_k \sim W_1$, $\forall k$.
From above reasons, we can conclude that $Q_k$, $X_k$, and $W_k$ are
identical for all $k$, respectively, invariant with the feedback
sizes $b_1, \ldots, b_K$.

We can prove the second property that $\{Q_k, X_k, W_k\}_{k=1}^K$ is
independent of all $\{Z_k\}_{k=1}^K$ because $Z_k$ is only dependent
on $b_k$ as shown in \eqref{eqn:Z_k}.

Because $\{Q_i, X_i, W_i\}$ is interchangebly obtained from $\{Q_k,
X_k, W_k\}$ by swapping the index of $\mathbf{h}_i$ and
$\mathbf{h}_k$ whose distribution are the same, i.e., $Q_i \sim
Q_k$, $X_i \sim X_k$, and $W_i \sim W_k$, we can obtain the third
property such that
\begin{align}
 f_{Q_k, X_k, W_k}(q, x, w) = f_{Q_1, X_1, W_1}(q, x, w),
    \quad  k = 1, \ldots, K. \NN
\end{align}

When all users use the equal feedback size, (i.e., $Z_k \sim
\bar{Z}$, $\forall k$), the average achievable rate of each user is
the same such that $\mathbb{E}\left[ \log_2\left( 1 + \frac{
\frac{P}{M} Q_k X_k}{1+ \frac{P}{M} Q_k W_k \bar{Z}} \right) \right]
=\mathbb{E}\left[ \log_2\left( 1+\frac{\frac{P}{M} Q_1X_1}{1+
\frac{P}{M} Q_1 W_1 \bar{Z}}\right) \right]$ for all $k$.
%
%
This can be explained from the fact that $f_{Q_k, X_k, W_k,
\bar{Z}}(q, x, w, z) \stackrel{(a)}{=} f_{Q_k, X_k, W_k}(q, x, w)
f_{\bar{Z}}(z) \stackrel{(b)}{=} f_{Q_1, X_1, W_1}(q, x, w)
f_{\bar{Z}}(z) \stackrel{(a)}{=}f_{Q_1, X_1, W_1, \bar{Z}}(q, x, w,
z)$ where $(a)$ and $(b)$ are from the second property and the third
property, respectively.

%

\section*{Appendix B. Proof of Theorem \ref{theorem:strategy_low}}
\setcounter{equation}{0}
\renewcommand{\theequation}{B.\arabic{equation}}
\label{appendix:strategy_low}

To prove Theorem \ref{theorem:strategy_low}, we firstly show the
average quantization error $\mathbb{E}[Z_k]$ is a discretely convex
function of $b_k$. Then, we use the majorization theory.
We start from following Lemma.

\begin{lemma}\label{lemma:EZ_convexity}
The average quantization error $\mathbb{E}[Z_k]$ is a discretely
convex function of $b_k$.
\end{lemma}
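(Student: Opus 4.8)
The plan is to first obtain a closed-form expression for $\mathbb{E}[Z_k]$ as a function of $b_k$, then show the resulting sequence is discretely convex, i.e., $\mathbb{E}[Z_k]\big|_{b_k=b+2} - 2\,\mathbb{E}[Z_k]\big|_{b_k=b+1} + \mathbb{E}[Z_k]\big|_{b_k=b} \ge 0$ for every nonnegative integer $b$. Starting from the complementary CDF in \eqref{eqn:QE_CDF}, namely $\PR{Z_k>z}=(1-z^{M-1})^{2^{b_k}}$, I would compute $\mathbb{E}[Z_k]=\int_0^1 \PR{Z_k>z}\,dz = \int_0^1 (1-z^{M-1})^{2^{b_k}}\,dz$. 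Substituting $t=z^{M-1}$ turns this into a Beta integral, giving $\mathbb{E}[Z_k] = \frac{1}{M-1}B\!\left(\tfrac{1}{M-1},\,2^{b_k}+1\right) = \frac{1}{M-1}\cdot\frac{\Gamma(\frac{1}{M-1})\,\Gamma(2^{b_k}+1)}{\Gamma(\frac{1}{M-1}+2^{b_k}+1)}$, which for brevity I will write as $g(2^{b_k})$ where $g(n) = \frac{1}{M-1}B\!\left(\tfrac{1}{M-1},\,n+1\right)$.

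Next I would establish the relevant monotonicity/convexity of $g$. Since $g(n) = \mathbb{E}[Z]$ for the minimum of $n$ i.i.d.\ Beta$(M-1,1)$ variables, $g$ is positive, strictly decreasing in $n$, and convex in $n$ on the positive reals; the cleanest route is the ratio identity $\frac{g(n+1)}{g(n)} = \frac{n+1}{n+1+\frac{1}{M-1}} < 1$, which shows decrease, and then to bound the successive differences. Writing $a_b \triangleq g(2^b)$, discrete convexity of the sequence $b\mapsto a_b$ requires $a_{b+2}-a_{b+1} \ge a_{b+1}-a_b$, i.e., the increments $\Delta_b \triangleq a_{b}-a_{b+1} > 0$ satisfy $\Delta_{b+1}\le\Delta_b$. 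Because $n\mapsto g(n)$ is convex and decreasing, and $2^{b+1}-2^b = 2^b$ grows while $2^{b+2}-2^{b+1}=2^{b+1}$ grows too, I cannot directly invoke midpoint convexity of $g$; instead I would use the explicit product form $\Delta_b = g(2^b) - g(2^{b+1}) = g(2^b)\left(1 - \prod_{j=2^b}^{2^{b+1}-1}\frac{j+1}{j+1+\frac{1}{M-1}}\right)$ and compare $\Delta_{b+1}/\Delta_b$, showing it is at most $1$. Alternatively, and perhaps more transparently, I would show that the function $b\mapsto a_b$ is the composition of the convex decreasing function $g$ with the convex increasing function $b\mapsto 2^b$, together with a log-convexity argument: $g$ is log-convex (a ratio of Gamma functions, hence log-convex by the Bohr–Mollerup/Gauss multiplication properties), and the log-convexity of $g$ plus convexity of $2^b$ will yield convexity of the composite sequence after taking discrete second differences.

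The main obstacle is precisely this last step: plain convexity of $g$ is not enough to survive composition with the exponential $2^b$, so I expect to need the stronger property that $g$ is log-convex (equivalently, that $\log g(n)$ is convex in $n$), which holds because $g(n)$ is, up to a constant, $\frac{\Gamma(n+1)}{\Gamma(n+1+c)}$ with $c=\frac{1}{M-1}>0$, and $\log\Gamma$ is convex so differences of $\log\Gamma$ behave monotonically. Once log-convexity of $g$ is in hand, the chain $a_b = g(2^b)$ with $2^b$ convex and increasing gives $a_{b+2}-2a_{b+1}+a_b\ge 0$ by a direct second-difference computation. The routine Beta-integral manipulations and the Gamma-ratio inequalities I will carry out in the appendix; here I have only indicated their structure. $\blacksquare$
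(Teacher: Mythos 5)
Your setup is sound and essentially reproduces the paper's: the Beta--integral evaluation $\mathbb{E}[Z_k\,\vert\,b_k=b]=\tfrac{1}{M-1}\beta\!\left(\tfrac{1}{M-1},2^{b}+1\right)$ is equivalent to the expression $2^{b}\beta\!\left(2^{b},\tfrac{M}{M-1}\right)$ that the paper quotes, and your ratio identity $g(n+1)/g(n)=(n+1)/(n+1+\tfrac{1}{M-1})$ is exactly the ingredient the paper uses to write $\mathbb{E}[Z_k\vert b+1]=\eta_b\,\mathbb{E}[Z_k\vert b]$ with $\eta_b$ an explicit product. The problem is that your proof stops precisely where the content begins. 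Your route (a) (``compare $\Delta_{b+1}/\Delta_b$ and show it is at most $1$'') is a restatement of the goal, not an argument. Your route (b) is unsound: log-convexity of $g$ together with convexity and monotonicity of $b\mapsto 2^b$ does \emph{not} imply convexity of the composite. Take $g(x)=(1+x)^{-0.1}$, which is positive, decreasing and log-convex; then $g(2^b)$ has second difference $5^{-0.1}-2\cdot 3^{-0.1}+2^{-0.1}\approx-0.0076<0$ at $b=0$. Nor can you instead prove log-convexity of the sequence $a_b=g(2^b)$ and deduce convexity from it: for $M=2$ one has $a_b=1/(2^b+1)$, and $a_1^2=\tfrac19>\tfrac1{10}=a_0a_2$, so the sequence is log-\emph{concave}. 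So neither of your proposed mechanisms closes the gap.

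For comparison, the paper reduces the second difference to $(\eta_{b+1}\eta_b-2\eta_b+1)\,\mathbb{E}[Z_k\vert b]$ and then asserts that $\eta_{b+1}\eta_b-2\eta_b\in[-1,0]$, extremized at $M=2$ and $M=\infty$. You should be aware that the inequality being deferred here is genuinely delicate rather than ``routine'': at $b=0$ it actually \emph{fails} for $M\ge 4$. Indeed, for $M=4$ one computes exactly
\begin{align}
 \mathbb{E}[Z\vert b=2]-2\,\mathbb{E}[Z\vert b=1]+\mathbb{E}[Z\vert b=0]
 =\frac{243}{455}-\frac{9}{7}+\frac{3}{4}=-\frac{15}{9100}<0,
\end{align}
so no soft structural principle (composition rules, log-convexity) can possibly deliver the claimed second-difference inequality uniformly in $M$ and $b$; any honest proof must engage quantitatively with the products $\eta_b=\prod_{j=2^b+1}^{2^{b+1}}\tfrac{j}{j+1/(M-1)}$ and, at best, establish the inequality for $b$ beyond a small threshold. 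Deferring ``the routine Beta-integral manipulations and the Gamma-ratio inequalities'' to an appendix therefore hides the entire substance of the lemma, and the one route you do spell out in detail would not survive that appendix.
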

\begin{proof}
It was shown in \cite{J2006, AL2007} that $\mathbb{E}[Z_k \vert b_k
= b] =2^{b} \cdot \beta \left(2^{b}, \frac{M}{M-1}\right)$,
%
%
where $\beta(x,y)$ is the beta function given by
$\beta(x,y)=\frac{\Gamma(x)\Gamma(y)}{\Gamma(x+y)}$.
Using this, we obtain
\begin{align}
 &\mathbb{E}[Z_k \vert b_k = b+1]\NNL
 &= 2^{b+1} \cdot \beta\left(2^{b+1}, \frac{M}{M-1}\right) \NNL
 &= \frac{2\cdot\Gamma\left(2^{b+1}\right)\Gamma \left(2^b + \frac{M}{M-1}\right)}
    {\Gamma\left(2^b\right)\Gamma \left(2^{b+1} + \frac{M}{M-1}\right)}
 \times \frac{2^b \cdot\Gamma\left(2^b\right)\Gamma\left(\frac{M}{M-1}\right)}
    {\Gamma \left(2^b + \frac{M}{M-1}\right)}\NNL
 &\stackrel{(a)}{=} \frac{2 \cdot\prod_{i=2^b} ^{2^{b+1}-1} i} {\prod_{i=2^b}^{2^{b+1}-1}
    \big(i+\frac{M}{M-1}\big)} \times \mathbb{E}[Z_k \vert b_k = b],\NN
\end{align}
where the equality $(a)$ is from $\Gamma(x+1)=x\Gamma(x)$. Thus, we
can rewrite $\mathbb{E}[Z_k \vert b_k = b+1] = \eta_b \cdot
\mathbb{E}[Z_k \vert b_k = b]$ where $\eta_b\triangleq 2 \cdot \prod
_{i=2^b} ^{2^{b+1}-1} \frac{i}{\big(i+\frac{M}{M-1}\big)}$.

When we define a forward difference function $\Delta (b) \triangleq
\mathbb{E}[Z_k\vert b_k=b+1] - \mathbb{E}[Z_k\vert b_k=b]$, we can
find that the forward difference function is an increasing function
of $b$, i.e., $\Delta (b+1) > \Delta (b)$, such that
\begin{align}
&\Delta (b+1) - \Delta (b)\NNL
 & = \mathbb{E}[Z_k \vert b_k= b+2 ]
    - 2\cdot\mathbb{E}[Z_k \vert b_k=b+1]
    + \mathbb{E}[Z_k \vert b_k=b] \NNL
 & = (\eta_{b+1}\eta_b - 2\eta_b + 1)\cdot
    \mathbb{E}[Z_k\vert b_k=b] 
 \stackrel{(a)}{>} 0 \NN
\end{align}
where $(a)$ is from the fact that $\eta_{b+1}\eta_b - 2\eta_b = 4
\cdot \left(\prod_{i=2^b} ^{2^{b+2}-1}
\frac{i}{\big(i+\frac{M}{M-1}\big)} - \prod_{i=2^b} ^{2^{b+1}-1}
\frac{i} {\big(i + \frac{M} {M-1}\big)}\right)$ is ranged in $[-1,
0]$ and minimized and maximized when $M=2$ and $M=\infty$,
respectively.
Since a discretely convex function has an increasing
(non-decreasing) forward difference function \cite{Y2002},
$\mathbb{E}[Z_k]$ is a discretely convex function of $b_k$.
\end{proof}
%
It is widely known in majorization theory that for a convex function
$h:\mathbb{R}\to\mathbb{R}$ and two vectors $\mathbf{a}_1,
\mathbf{a}_2 \in \mathbb{R}^n$, 
\begin{align}
 \sum_{i=1}^n h([\mathbf{a}_1]_i) \le \sum_{i=1}^n
 h([\mathbf{a}_2]_i),
\end{align}
whenever $\mathbf{a}_1 \preceq \mathbf{a}_2$.
In the low SNR region, the sum average rate with feedback rate
sharing strategy is only related with $\sum_{k=1}^K \mathbb{E}[Z_k]$
as stated in \eqref{eqn:optimization_problem_L}.
From Lemma \ref{lemma:EZ_convexity}, we know the average
quantization error is a convex function of $b_k$.
With the feedback rate sharing strategies $\mathbf{b}_1 \preceq
\mathbf{b}_2$, therefore, we can conclude that
\begin{align}
    \sum_{k=1}^K \mathbb{E}\{Z_k\vert b_k=[\mathbf{b}_1]_k\} \le
    \sum_{k=1}^K \mathbb{E}\{Z_k\vert b_k=[\mathbf{b}_2]_k\},
\end{align}
and equivalently, $\lim_{P\to 0} \mathcal{R}(P,\mathbf{b}_1) >
\lim_{P\to 0} \mathcal{R} (P,\mathbf{b}_2)$.

\section*{Appendix C. Proof of Theorem \ref{theorem:strategy_high}}
\setcounter{equation}{0}
\renewcommand{\theequation}{C.\arabic{equation}}
\label{appendix:strategy_high}
%
We firstly show that $\mathbb{E}\left[\log_2 Z_k\right]$ is a
discretely concave function of $b_k$ in following lemma.

\begin{lemma}\label{lemma:ElogZ_concavity}
The average quantization error $\mathbb{E}[\log_2 Z_k]$ is a
discretely concave function of $b_k$.
\end{lemma}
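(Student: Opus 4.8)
The plan is to mimic the structure of the proof of Lemma~\ref{lemma:EZ_convexity}, but track $\mathbb{E}[\log_2 Z_k \mid b_k=b]$ instead of $\mathbb{E}[Z_k \mid b_k=b]$. First I would obtain a closed-form (or at least a clean recursive) expression for $\mathbb{E}[\log_2 Z_k \mid b_k=b]$ by integrating $\log_2 z$ against the density of $Z_k$, using the complementary CDF $\PR{Z_k>z}=(1-z^{M-1})^{2^b}$ from \eqref{eqn:QE_CDF}. A convenient route is integration by parts: $\mathbb{E}[\ln Z_k] = -\int_0^1 \frac{1}{z}\big(1-(1-z^{M-1})^{2^b}\big)\,dz$ after the boundary terms vanish, or alternatively substituting $u=z^{M-1}$ to reduce to a Beta-type integral whose $b$-dependence is explicit. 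The goal of this step is to isolate how the quantity changes when $b \mapsto b+1$, i.e., when $N\triangleq 2^b$ doubles.

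The key step is to show that the forward difference $\Delta(b)\triangleq \mathbb{E}[\log_2 Z_k\mid b_k=b+1]-\mathbb{E}[\log_2 Z_k\mid b_k=b]$ is decreasing in $b$, i.e., $\Delta(b+1)\le\Delta(b)$, which by the discrete-concavity characterization in \cite{Y2002} gives the claim. The natural approach is to write, with $N=2^b$ and using $u=z^{M-1}$ so $z=u^{1/(M-1)}$, something like
\begin{align}
 \mathbb{E}[\ln Z_k\mid b_k=b] = \frac{1}{M-1}\int_0^1 \ln u \cdot N(1-u)^{N-1}\,du
    = -\frac{1}{M-1}\big(\psi(N+1)-\psi(1)\big),\NN
\end{align}
where $\psi$ is the digamma function (this is the known identity $\mathbb{E}[\ln(\text{Beta}(1,N))]=\psi(1)-\psi(N+1) = -H_N$ with $H_N$ the $N$th harmonic number). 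Then $\mathbb{E}[\log_2 Z_k\mid b_k=b] = -\tfrac{1}{(M-1)\ln 2}\,H_{2^b}$, so $\Delta(b) = -\tfrac{1}{(M-1)\ln 2}\big(H_{2^{b+1}}-H_{2^b}\big)$, and concavity reduces to showing $H_{2^{b+2}}-H_{2^{b+1}} \ge H_{2^{b+1}}-H_{2^b}$, i.e., that consecutive dyadic blocks of the harmonic series are non-decreasing in total mass — which is immediate termwise (the block $\sum_{i=2^{b+1}+1}^{2^{b+2}} 1/i$ has twice as many terms as $\sum_{i=2^b+1}^{2^{b+1}} 1/i$, each at most as large pairwise, but summing over twice as many indices yields a larger or equal sum; a direct pairing $1/i \ge 1/(2i)+1/(2i)$... more carefully, $\sum_{i=2^{b+1}+1}^{2^{b+2}}1/i \ge \sum_{j=2^{b}+1}^{2^{b+1}} (1/(2j)+1/(2j-1)) \ge \sum_{j=2^b+1}^{2^{b+1}} 1/j$). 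I would spell out this elementary inequality cleanly.

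The main obstacle is establishing the exact closed form involving the harmonic number (or equivalently justifying the Beta-integral identity $\int_0^1 \ln u\, N(1-u)^{N-1}du = -H_N$) and handling the boundary behavior of $\log_2 z$ near $z=0$ when integrating by parts — one must check the relevant limits vanish, which they do since $z\log z\to 0$ and $\PR{Z_k>z}\to 1$ fast enough. If the clean digamma identity is deemed too heavy a hammer for an appendix, the fallback is to keep everything at the level of the recursion $\mathbb{E}[Z_k\mid b_k=b+1]=\eta_b\,\mathbb{E}[Z_k\mid b_k=b]$ style used in Lemma~\ref{lemma:EZ_convexity}: express $\mathbb{E}[\log_2 Z_k\mid b_k=b+1]-\mathbb{E}[\log_2 Z_k\mid b_k=b]$ as $-\tfrac{1}{(M-1)\ln 2}\sum_{i=2^b+1}^{2^{b+1}} \tfrac{1}{i}$ directly and argue monotonicity of these partial sums, avoiding digamma notation entirely. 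Either way, once the forward difference is identified with a dyadic harmonic block sum, discrete concavity follows, and hence — combined with the majorization argument exactly as in Appendix~B but with the inequality direction reversed because the function is concave — Theorem~\ref{theorem:strategy_high} is obtained.
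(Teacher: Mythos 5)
Your proposal is correct and follows essentially the same route as the paper's Appendix C: both reduce to the closed form $\mathbb{E}[\log_2 Z_k\mid b_k=b]=\frac{-\log_2 e}{M-1}\sum_{i=1}^{2^b}\frac{1}{i}$ (which the paper simply cites from \cite{J2006} rather than rederiving via the Beta integral), identify the forward difference with the dyadic harmonic block $\frac{-\log_2 e}{M-1}\sum_{i=2^b+1}^{2^{b+1}}\frac{1}{i}$, and conclude discrete concavity from its monotonicity. Your pairing argument $\frac{1}{2j}+\frac{1}{2j-1}\ge\frac{1}{j}$ actually supplies the justification for the block monotonicity that the paper asserts without proof, so it is a welcome (correct) refinement rather than a departure.
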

\begin{proof}
In \cite{J2006}, it was shown that $\mathbb{E}\left[\log_2 Z_k\vert
b_k = b\right] = \frac{-\log_2e} {M-1} \sum_{i=1} ^{2^{b}}
\frac{1}{i}$.
%
%
In this case, the forward difference function $\Delta(b) \triangleq
\mathbb{E}\left[\log_2 Z_k \vert b_k = b+1\right] -
\mathbb{E}\left[\log_2 Z_k \vert b_k = b \right]$ becomes
\begin{align}
    \Delta (b)
    &= \frac{-\log_2e}{M-1} \sum_{i=2^b+1}^{2^{(b+1)}} \frac{1}{i},
\end{align}
and is a monotonically decreasing function of $b$, i.e., $\Delta (b)
> \Delta (b+1)$.
Since a discretely concave function has a decreasing(non-increasing)
forward difference function \cite{Y2002}, $\mathbb{E}[\log_2 Z_k]$
is a discretely concave function of $b_k$.
\end{proof}
%
In majorization theory, for a concave function $g:\mathbb{R}
\to\mathbb{R}$, it satisfies that
\begin{align}
 \sum_{i=1}^n g([\mathbf{a}_1]_i) \ge \sum_{i=1}^n g([\mathbf{a}_2]_i)
\end{align}
whenever two vectors $\mathbf{a}_1, \mathbf{a}_2 \in \mathbb{R}^n$
satisfies $\mathbf{a}_1 \preceq \mathbf{a}_2$.
In the high SNR region, the average sum rate with feedback rate
sharing strategy is related with $\sum_{k=1}^K \mathbb{E}[\log_2
Z_k]$ as stated in \eqref{eqn:optimization_problem_H}.
As stated in Lemma \ref{lemma:ElogZ_concavity}, $\mathbb{E}[\log_2
Z_k]$ is the concave function of $b_k$.
Thus, under the feedback rate sharing strategies $\mathbf{b}_1
\preceq \mathbf{b}_2$, we can conclude that
\begin{align}
 \sum_{k=1}^K \mathbb{E}\{\log_2 Z_k\vert b_k=[\mathbf{b}_1]_k\} \ge
 \sum_{k=1}^K \mathbb{E}\{\log_2 Z_k\vert b_k=[\mathbf{b}_2]_k\},\NN
\end{align}
equivalently, $\lim_{P\to \infty} \mathcal{R}^-(P,\mathbf{b}_1)
> \lim_{P\to \infty} \mathcal{R}^- (P,\mathbf{b}_2)$.
As stated in Section \ref{section:high_SNR_region}, in the high SNR
region, the achievable rate at each user is dominated by the rate
decreasing term. Thus, we conclude that the feedback rate sharing
strategy $\lim_{P\to \infty} \mathcal{R}(P,\mathbf{b}_1) <
\lim_{P\to \infty} \mathcal{R} (P,\mathbf{b}_2)$ for feedback rate
sharing strategies $\mathbf{b}_1 \preceq \mathbf{b}_2$.

\vspace{.3in}

\linespread{1.6}

\newpage

\begin{figure}[!t]
    \centerline{\psfig{figure=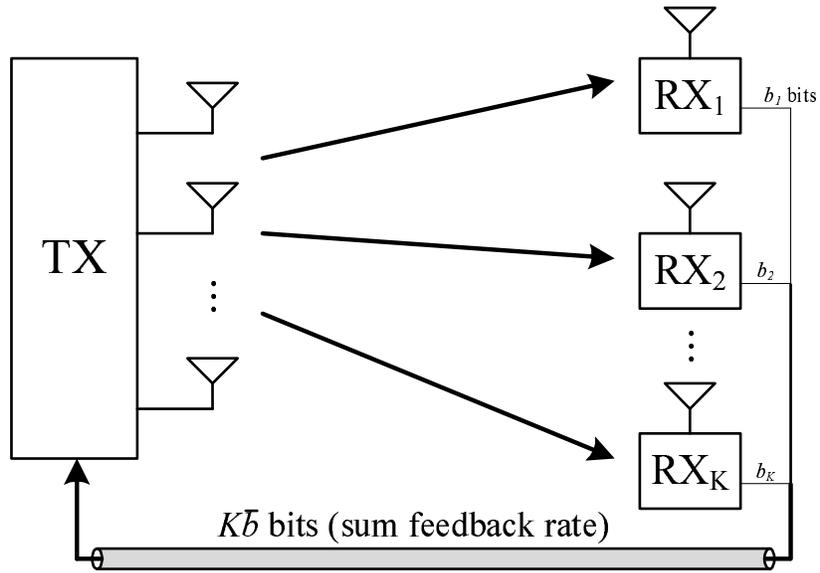,width={0.6\columnwidth}} }
    \caption{A system model. The sum feedback rate is shared by all users.}
    \label{fig1}
\end{figure}

\begin{figure}[!t]
    \centerline{\psfig{figure=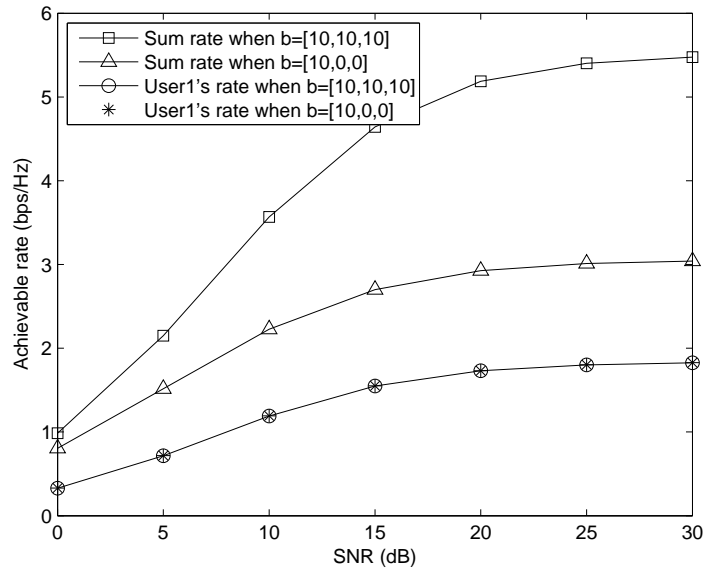,width={0.6\columnwidth}} }
    \caption{The sum rate and the achievable rate at the user 1 in
    $3\times 3$ MIMO BC. The achievable rate of
    user 1 is not affected by the other users' feedback sizes, while the
    sum rate is increased as the feedback sizes of other users increase.}
    \label{fig:capacity_user1}
\end{figure}


\begin{figure}[!t]
    \centerline{\psfig{figure=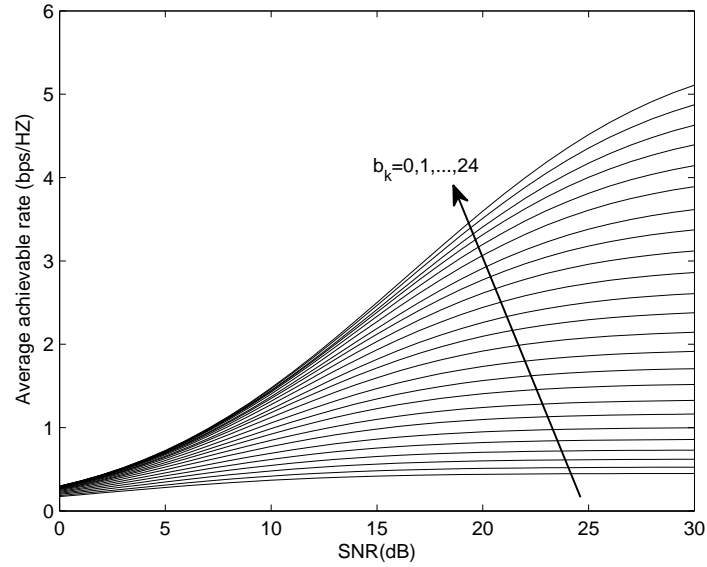,width={0.6\columnwidth}} }
    \caption{Achievable rate of a single user using $b_k$ feedback bits in a $4\times 4$ MIMO BC.}
    \label{fig:capacity_Bi}
\end{figure}

\begin{figure}[!t]
    \centerline{\psfig{figure=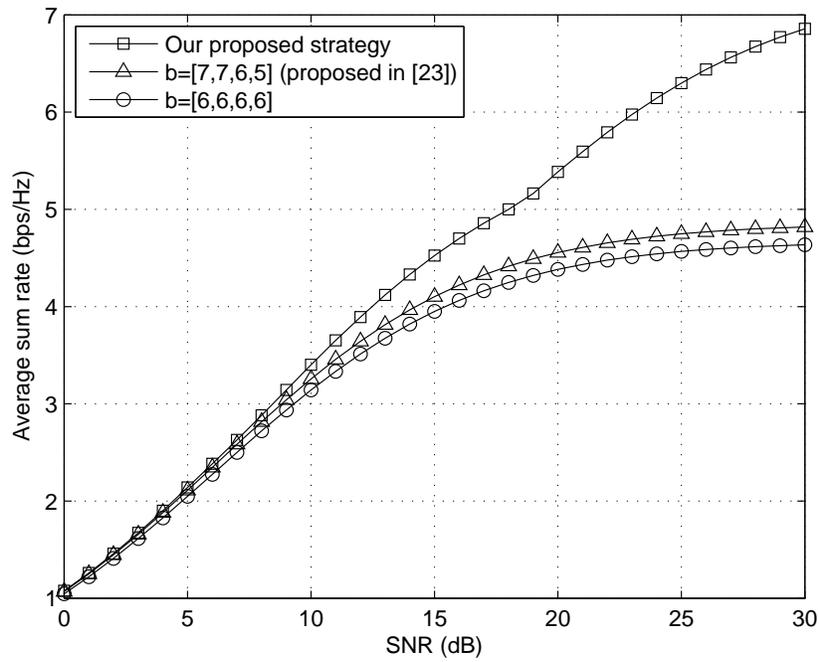,width={0.7\columnwidth}} }
    \caption{Sum rates of a $4\times 4$ MIMO BC using various
    feedback rate sharing strategies ($\sum b_k=24$). Different
    path losses among the users are considered $(\gamma_1,\gamma_2,
    \gamma_3,\gamma_4)=(1.5, 1.25, 1, 0.75)$.}
    \label{fig:capacity_pathlosses}
\end{figure}

\begin{figure}[!t]
    \centerline{\psfig{figure=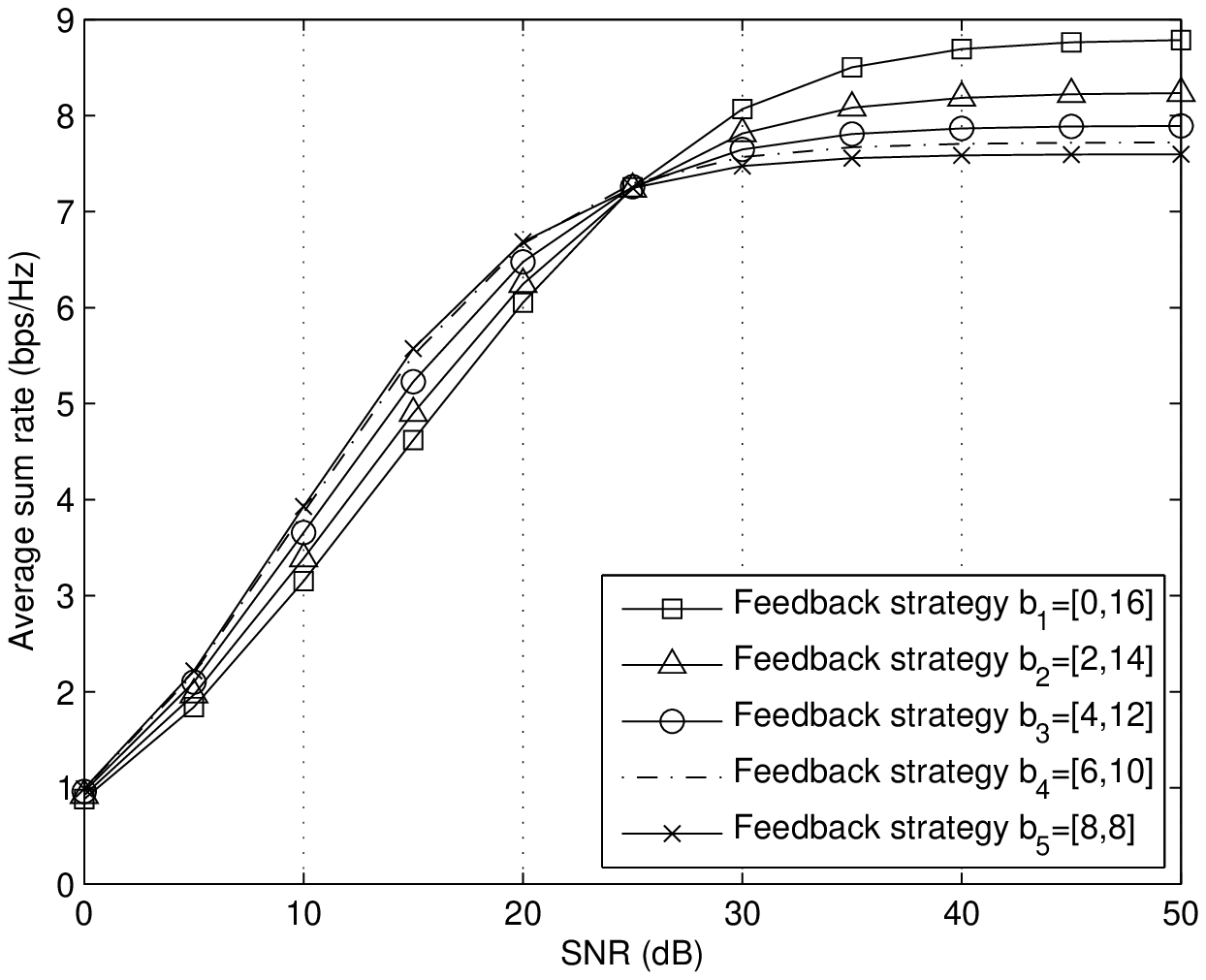,width={0.7\columnwidth}} }
    \caption{Sum rates of a $2\times 2$ MIMO BC using various
    feedback rate sharing strategies ($\sum b_k=16$).}
    \label{fig:sum_rate}
\end{figure}

\begin{figure}[!t]
\centering
\subfigure[Random vector codebook.]
    {\includegraphics[width=.45\columnwidth]{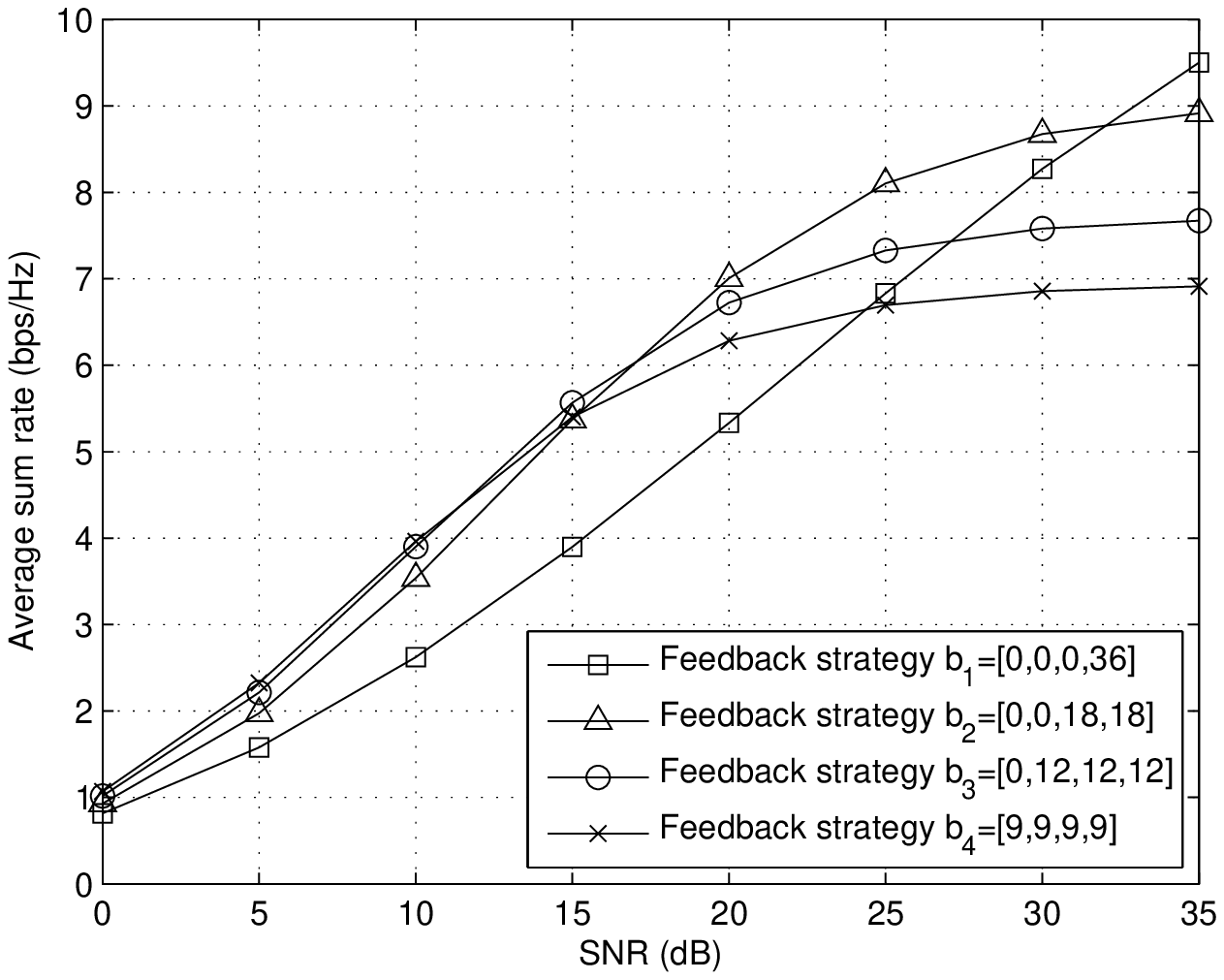}
    \label{fig:sum_rate_4user}}\\
\subfigure[Spherical cap codebook.]
    {\includegraphics[width=.45\columnwidth]{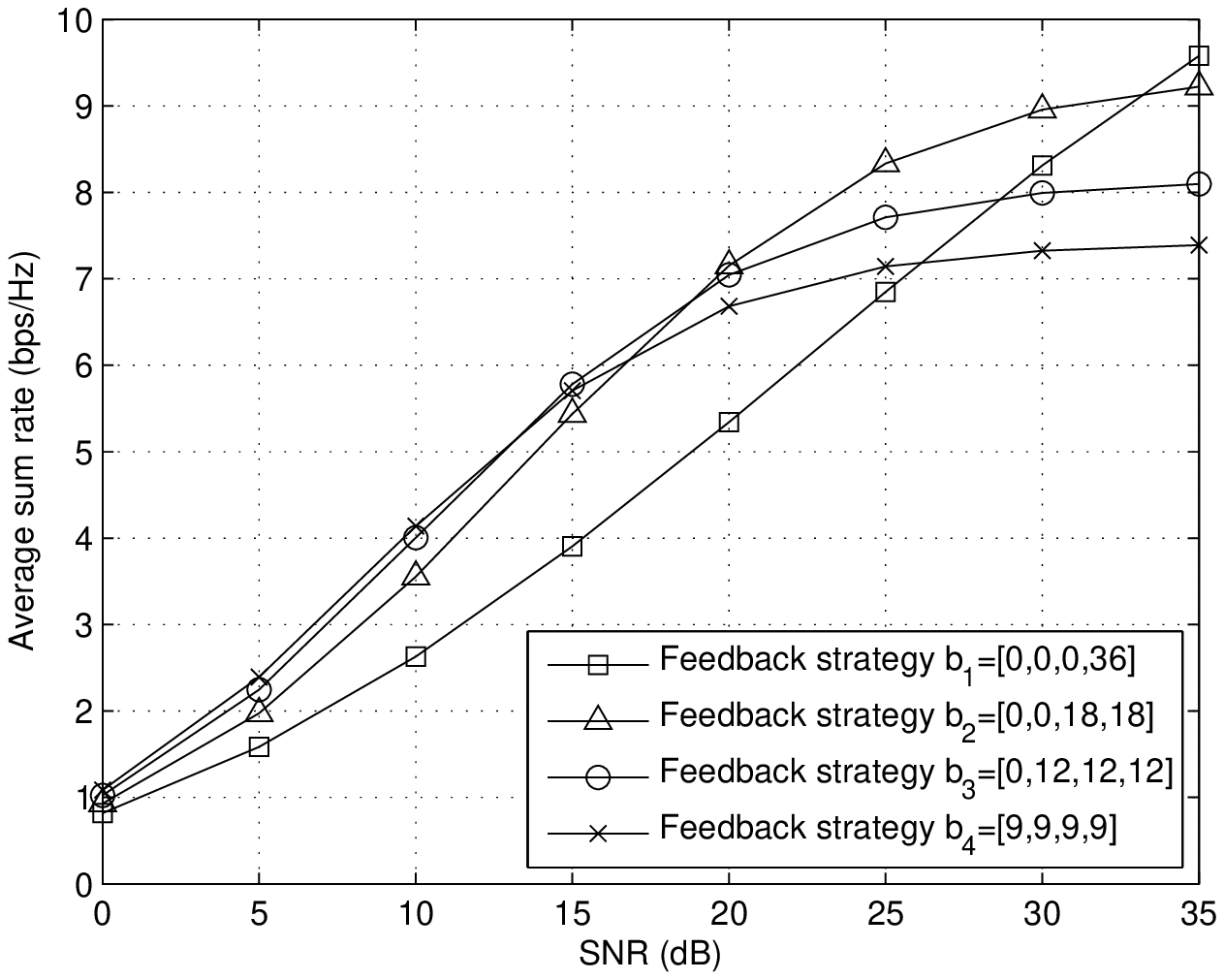}
    \label{fig:sum_rate_optimal_codebook}}\\
\subfigure[The conjectured average sum rate region for practical
quantization codebook.]
    {\includegraphics[width=.45\columnwidth]{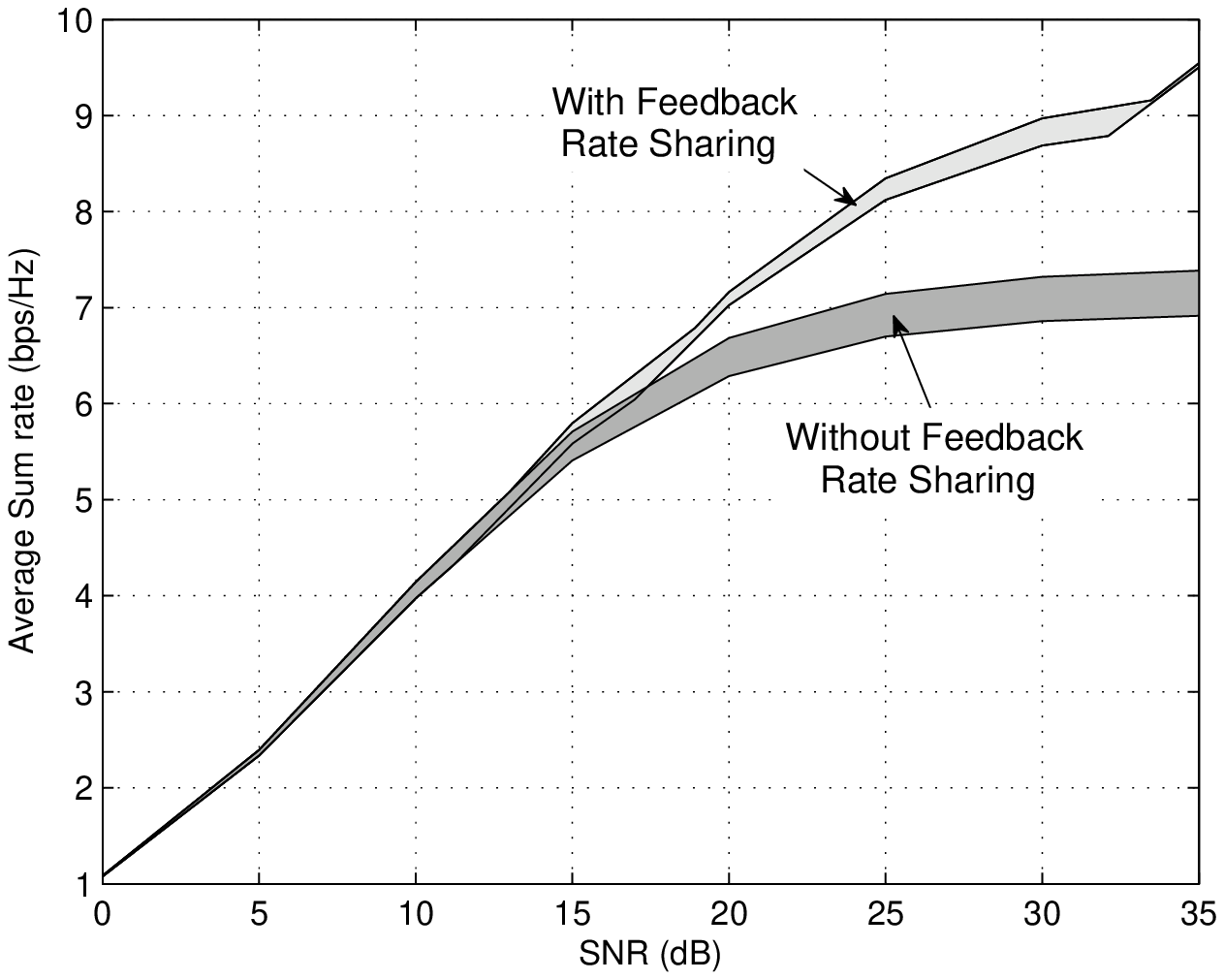}
    \label{fig:practical_codebook_case}}
\caption{Sum rates of a $4\times 4$ MIMO BC using various feedback
rate sharing strategies ($\sum b_k=36$).}
\end{figure}


\begin{figure}[!t]
\centering
\subfigure[ZF beamforming vs. TDMA]
    {\includegraphics[width=.51\columnwidth]{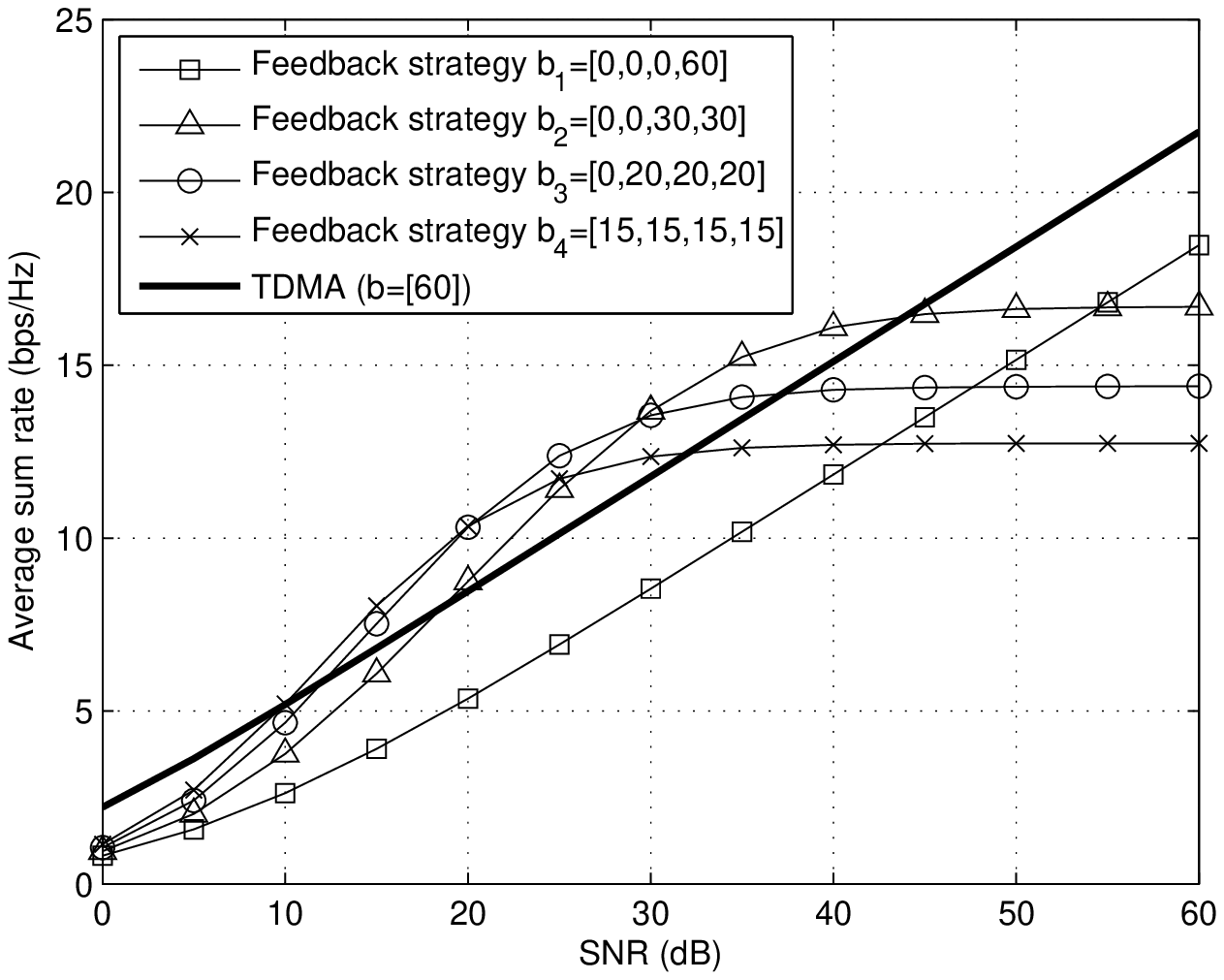}
    \label{fig:sum_rate_TDMA}}\\
\subfigure[Regularized ZF beamforming vs. TDMA]
    {\includegraphics[width=.51\columnwidth]{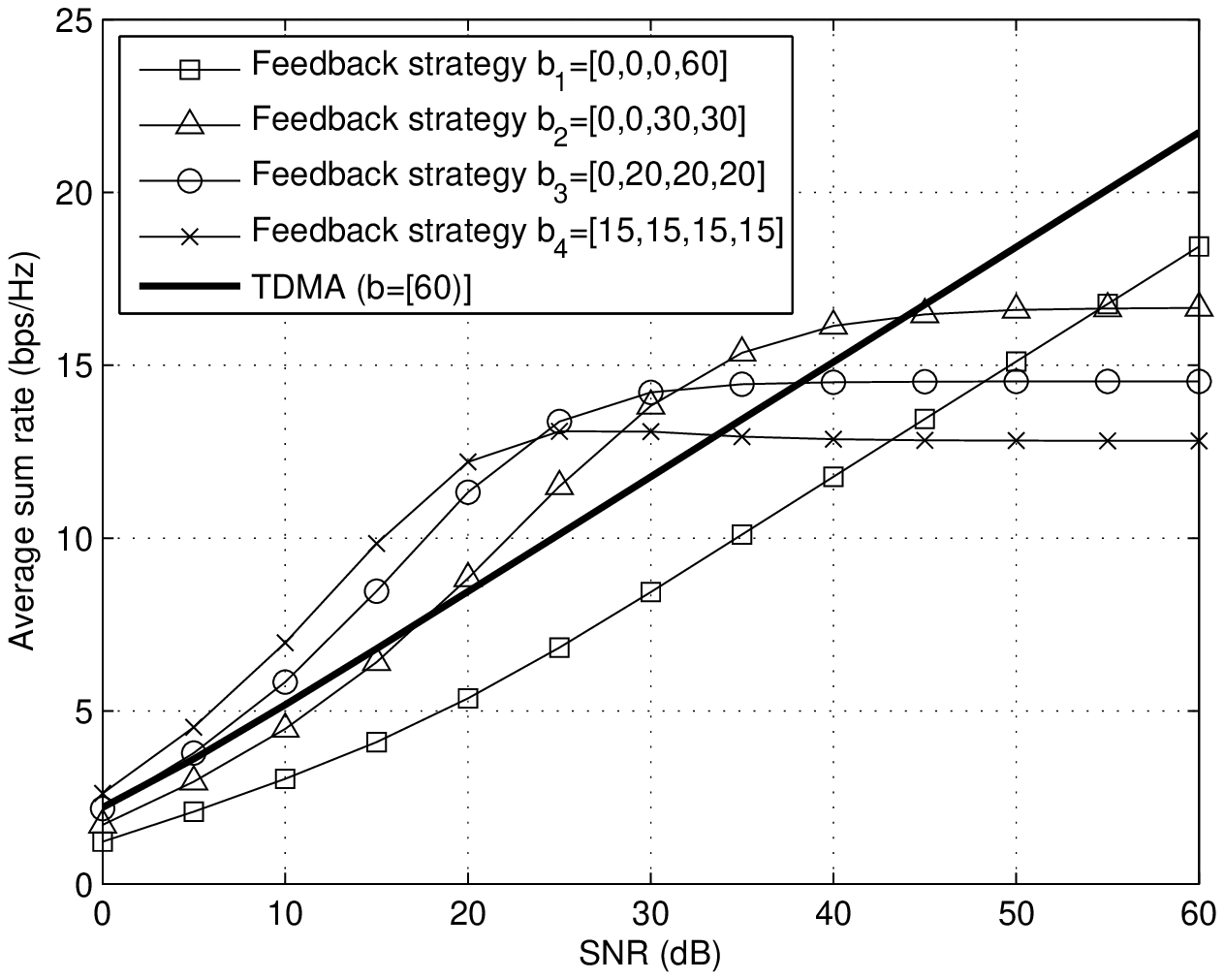}
    \label{fig:RVQ_TDMA_MMSE_60bits}}
\caption{Sum rates of a $4\times 4$ MIMO BC using various feedback
rate sharing strategies ($\sum b_k=60$).} \label{fig:RVQ_TDMA_MMSE}
\end{figure}

\end{document}